\def\@fnsymbol#1{\ensuremath{\ifcase#1\or \dagger\or \ddagger\or
   \mathsection\or \mathparagraph\or \|\or **\or \dagger\dagger
   \or \ddagger\ddagger \else\@ctrerr\fi}}
\newcommand*\samethanks[1][\value{footnote}]{\footnotemark[#1]}
\DeclareMathOperator*{\E}{\mathrm{E}}
\DeclareMathOperator*{\argmax}{\mathrm{argmax}}
\DeclareMathOperator*{\argmin}{\mathrm{argmin}}
\newcommand{\mathsc}[1]{{\normalfont\textsc{#1}}}
\newcommand{\alg}{\mathsc{Alg}}
\newcommand{\reg}{\mathsf{Reg}}
\newcommand{\opt}{\mathsc{Opt}}
\newcommand{\ben}{\mathsf{OPT}}
\newcommand{\off}{\mathsc{Offset}}
\newcommand{\simple}{\mathsc{Greedy}}
\newcommand{\ep}{\varepsilon}
\newtheorem{theorem}{Theorem}[section]
\newtheorem{lemma}[theorem]{Lemma}
\newtheorem{corollary}[theorem]{Corollary}
\newtheorem*{theorem*}{Theorem}
\renewcommand{\d}{\ \mathrm{d}}
\renewcommand{\vec}[1]{{\mathbf{#1}}}
\definecolor{myblue}{rgb}{0.15, 0.1, 0.95}
\definecolor{mygreen}{rgb}{0.15, 0.65, 0.25}
\definecolor{myred}{rgb}{0.75, 0.25, 0.15}
\title{Regret Minimization with Noisy Observations}
\author{Mohammad Mahdian\thanks{Google Research. Email: \texttt{\{mahdian,maojm\}@google.com}.} \and Jieming Mao\samethanks[1] \and Kangning Wang\thanks{Duke University. Email: \texttt{knwang@cs.duke.edu}.}}
\date{}
\begin{document}

\maketitle

\begin{abstract}
In a typical optimization problem, the task is to pick one of a number of options with the lowest cost or the highest value. In practice, these cost/value quantities often come through processes such as measurement or machine learning, which are noisy, with quantifiable noise distributions. To take these noise distributions into account, one approach is to assume a prior for the values, use it to build a posterior, and then apply standard stochastic optimization to pick a solution. However, in many practical applications, such prior distributions may not be available. In this paper, we study such scenarios using a regret minimization model.

In our model, the task is to pick the highest one out of $n$ values. The values are unknown and chosen by an adversary, but can be observed through noisy channels, where additive noises are stochastically drawn from known distributions. The goal is to minimize the \emph{regret} of our selection, defined as the expected difference between the highest and the selected value on the worst-case choices of values. We show that the na\"ive algorithm of picking the highest observed value has regret arbitrarily worse than the optimum, even when $n = 2$ and the noises are unbiased in expectation. On the other hand, we propose an algorithm which gives a constant-approximation to the optimal regret for any $n$. Our algorithm is conceptually simple, computationally efficient, and requires only minimal knowledge of the noise distributions.
\end{abstract}



\section{Introduction}
Classical optimization often deals with the problem of selecting an object from a family that maximizes a value or minimizes a cost. The implicit assumption is that the cost or value are given quantities, with no uncertainty around them. In practice, however, this assumption is often not accurate. Values are often measured (e.g, through physical measurements) or estimated (e.g., using machine learning algorithms) and we only observe noisy estimates of the value. To tackle uncertainty, a number of modifications to the classical paradigm are proposed. One extreme is stochastic optimization, which assumes values are drawn from known distributions. For example, if we have a prior on the values, and the measurement/estimation method has a known error distribution, one can use the Bayes rule to compute the posterior distribution of the values. The other extreme is robust optimization, which assumes no knowledge of the distribution of the values, and tries to find a solution that works well for the worst-case values in given ranges. This paper studies a model that falls between these two extremes: what if no prior on the values is known, but the measurement/estimation method has known noise distribution?  We believe this is a reasonable question in practice, since the noise distribution of particular measurement/estimation methods are often well-understood, but prior distributions depend on the specific domain, and might be hard to estimate. 

We model this question using the following prototypical model of selecting one item from a given set of $n$ items. Item $i$ has a value of $v_i$, and our goal is to pick the item with the highest $v_i$. The algorithm does not know $v_i$, but observes a noisy estimate $v_i+a_i$ of this value, where the noise $a_i$ is drawn (independently) from a known distribution $A_i$. To evaluate the performance of the algorithm, we adopt the standard framework of \emph{regret minimization}. Regret is defined as the worst-case difference between the optimal solution of an omniscient scheme (that observes the $v_i$ values) and that of an algorithm with limited information (only knowing the noise distributions and the noisy observations), averaged over the noises. 

We believe this model is practically appealing: dividing uncertainty into two categories, the uncertainty around the $v_i$ values and the uncertainty around the noises, and treating these two different kinds of uncertainty separately is well-motivated. In machine learning, two different types of uncertainty are distinguished~\cite{MLuncertainty}: {\em aleatoric} uncertainty, which is the noise inherent in the setting under observation, and {\em epistemic} uncertainty, which is the noise originating from the model. The aleatoric type of uncertainty corresponds to the uncertainty around $v_i$'s in our model. It depends on the nature of the domain, and not on the specific machine learning system used, and therefore might be harder to quantify, whereas the epistemic kind, corresponding to $a_i$'s in our model, is easier to estimate. Another motivating example is when values are observed through physical measurement~\cite{MeasurementError}, where $a_i$'s capture the error in the measurement instruments. Yet another interesting example is when values are observed through a differentially private mechanism~\cite{DPbook}, in which case noises $a_i$ are explicitly added to the values by the mechanism. In all these examples, estimating the distribution of noise is significantly easier than estimating the distribution of the values, and therefore it is natural to look for algorithms that work when noises are stochastic and the values are adversarially chosen. 

Our first observation is that the simple algorithm that always picks the item with the highest observed value can have a regret that is unboundedly worse than the optimal regret, even if the noises are unbiased. In fact, we will show that the optimal algorithm is sometimes non-deterministic and, even if we restrict to the class of deterministic algorithms, need not be monotone. On the plus side, we give an algorithm that always achieves a regret that is within a constant factor of the optimal regret. Our algorithm has a very simple form: instead of choosing the item with the highest observed value $v_i+a_i$, it chooses the one with the highest $v_i+a_i-\mathrm{offset}_i$, where $\mathrm{offset}_i$ is an {\em offset} value that only depends on the noise distribution $A_i$ and not on any other noise distribution. Furthermore, the offset values can be computed efficiently using a formula that is reminiscent of the monopoly pricing formula in economic theory. 

\paragraph{Organization.} The rest of this paper is organized as follows: We present related work in \cref{sec:related} and formally introduce our model and notations in \cref{sec:prelim}. Our results and some intuition about the proofs are stated in \cref{sec:results}. In \cref{sec:binary}, we prove the main result in a simpler \emph{binary setting} to introduce some of our proof ideas and derive a result for later use. We move on to the general model in \cref{sec:general} and present our full proof there. In \cref{sec:conclusions}, we give several examples that prove lower bounds on the performance of natural algorithms (\simple) or classes of algorithms (deterministic, monotone).
We conclude our discussion with some open questions in \cref{sec:conclusions}.

\section{Related Work}
\label{sec:related}
Our model can be considered as a mid-way point between robust and stochastic optimization. Robust optimization (see the survey \cite{BertsimasBC11} for details) studies settings where distributional information is not available, and aims to find solutions that work well for the worst-case set of values in an allowable space. Stochastic optimization, on the other hand, assumes the values come from known distributions. See, for example,~\cite{BirgLouv97,IKMM,Nikolova}. The way our model combines a worst-case instance with random noise is similar to the celebrated smoothed analysis model of Spielman and Teng~\cite{SpielmanT04}, which interpolates between worst-case and average-case analyses of the simplex algorithm by evaluating its expected running time on instances obtained by adding random noise to an adversarially chosen instance. 

Another component of our model is the notion of minimizing the worst-case regret, which was originally presented in \cite{Savage51}, and has been used in numerous contexts in theoretical computer science (see, for example,~\cite{bell1982regret,gordon1999regret,AuerMAB}).

The problem of finding the maximum (or in general top-$k$) with noisy information has been extensively studied in the noisy pairwise comparison model (see for example \cite{FeigeRPU94, DBLP:conf/icml/Busa-FeketeSCWH13, BMW16, CohenAddadMM20}). The goal there is usually minimizing the query complexity or the round complexity for outputting the maximum correctly with high probability. 

The area of prior-independent auctions (e.g. \cite{DBLP:conf/stoc/HartlineR08,DBLP:journals/geb/DhangwatnotaiRY15,DBLP:conf/sigecom/FuILS15,DBLP:journals/mansci/AllouahB20,DBLP:conf/focs/HartlineJL20}) is on designing near-optimal auctions for revenue maximization in a similar and harsher setting in which the prior information of bidders' values is completely unknown.

\section{Model}
\label{sec:prelim}
Let there be a set $N = \{1, 2, \ldots, n\}$ of $n$ items. Item $i$ has an intrinsic value $v_i \in \mathbb{R}$, and is associated with a noise $a_i \in \mathbb{R}$, that is drawn from a known distribution $A_i$. We assume all these draws are mutually independent. The observed value of this item is $s_i := v_i + a_i$. We assume, without loss of generality, that the noise distributions $A_i$'s are unbiased, i.e., $\E[A_i] = 0$ for every $i\in N$.\footnote{This is without loss of generality in our model, since the $A_i$ distributions are known and the algorithm can subtract $E[A_i]$ from the observed value $s_i$.}

An algorithm $\alg$ aims to select exactly one item from $N$ with a high intrinsic value. It has access to the observed values $\{s_i\}_{i \in N}$ and the noise distributions $\{A_i\}_{i \in N}$. We evaluate $\alg$ based on its regret. For any vector of intrinsic values, its regret is the expected difference (over the noise draws) in intrinsic value between the best item and the pick of $\alg$. Formally, let $\alg(\vec{s}, \vec{A}; \vec{v})$ be the expected intrinsic value that $\alg$ picks when its input is $(\vec{s}, \vec{A})$ and the intrinsic value vector is $\vec{v}$, and we have
\[
\reg(\alg, \vec{v}) := \max_{i \in N} v_i - \E_{\vec{a} \sim \vec{A}}\left[\alg(\vec{s} = \vec{v} + \vec{a}, \vec{A}; \vec{v})\right].
\]
The overall regret of $\alg$ is defined to be its regret on the worst-case choice of intrinsic values:
\[
\reg(\alg) := \sup_{\vec{v} \in \mathbb{R}^n} \reg(\alg, \vec{v}).
\]

Given the noise distributions $\vec{A}$, $\ben$ is the optimal regret that any algorithm achieves in the instance.
\[
\ben := \inf_{\alg} \reg(\alg).
\]

We say an algorithm $\alg$ is a $c$-approximation if for every set of noise distributions $\vec{A}$, $\reg(\alg)$ is at most $c\cdot\ben$. Note that for certain set of distributions $\vec{A}$, $\ben$ can be infinite. To avoid such cases, we make the benign assumption that for each $i \in N$, the function $\varphi_i(x) := \Pr_{a_i \sim A_i}[|a_i| \geq x] \cdot x$ is bounded on $x \in [0, +\infty)$. It is implied by, e.g., $\E[|A_i|]$ being finite.

Perhaps the simplest and the most natural algorithm for this problem is one that always picks the highest observed value. We call this algorithm $\simple$.

\paragraph{Binary case.}
An important special case of the problem is when $n=2$. We call this the {\em binary} case, since the algorithm only needs to make a binary choice between the two items. In the binary case, without loss of generality, we can assume that the second item has value $v_2 = 0$ and noise distribution $A_2$ that is deterministically $0$. This is because for any two distributions $A_1$ and $A_2$, solving the problem on the observed value $(s_1,s_2)$ and distributions $(A_1,A_2)$ is equivalent to solving the problem on $(s_1-s_2,0)$ and distributions $(A_1-A_2,0)$. Therefore, the binary case of the problem is equivalent to the problem of deciding whether to pick an item $1$ and get $v_1$, or not to do so and get $0$.

\section{Our Results}
\label{sec:results}
First, to illustrate that the problem is non-trivial and get an intuition for some of the hard cases, we observe that the simple algorithm $\simple$ can have a regret that is arbitrarily worse than $\ben$, even in the binary case.

\begin{restatable}{theorem}{expectation}
For any $M \in \mathbb{R}^+$, there is a distribution $A_1$ such that in the binary case with noise distribution $A_1$,
\[
\reg(\simple) > M \cdot \ben.
\]
\label{thm:expectation}
\end{restatable}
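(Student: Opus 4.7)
The plan is to exhibit an explicit distribution $A_1$ (depending on a parameter $\lambda>0$) for which $\reg(\simple)$ stays bounded away from zero while $\ben\to 0$ as $\lambda\to 0$, so any target ratio $M$ is eventually exceeded. First I would specialize $\simple$'s regret to the binary case: since item~$2$ has value $0$ and deterministic noise, $\simple$ picks item~$1$ iff $s_1=v_1+a_1>0$, hence
\[
\reg(\simple,v_1) = \begin{cases} v_1\cdot\Pr_{a\sim A_1}[a<-v_1] & v_1>0,\\ |v_1|\cdot\Pr_{a\sim A_1}[a>|v_1|] & v_1<0,\end{cases}
\]
and $\reg(\simple)$ is the supremum of this over $v_1$, which is finite by the boundedness assumption on $\varphi_1$.

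Next, I would design $A_1$ to be asymmetric (while keeping $\E[A_1]=0$) so that one of those suprema is large at some $v_1^\star$. Intuitively, $\simple$'s implicit threshold of $0$ becomes a poor choice when the noise is concentrated on a moderate negative value, offset by a rare large positive atom to preserve mean zero. A natural first try is $A_1=-c$ w.p.\ $1-\lambda$ and $A_1=c(1-\lambda)/\lambda$ w.p.\ $\lambda$: then $\simple$ incurs regret $\approx c$ at $v_1=c$ and symmetrically at $v_1\approx -c(1-\lambda)/\lambda$.

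Then, I would upper-bound $\ben$ by exhibiting an algorithm $\alg^\star$ whose worst-case regret is $\leq c/M$. The candidate $\alg^\star$ would use a non-zero threshold (the ``offset'' of the paper's main algorithm) and, if needed, a non-monotone or randomized rule on $s_1$, leveraging the paper's remark that the optimum need not be monotone. By analyzing $\reg(\alg^\star, v_1)$ uniformly over $v_1\in\mathbb{R}$ and choosing parameters so the maximum is dominated by vanishing terms, one obtains $\ben\leq c/M$ and concludes $\reg(\simple)/\ben\geq M$.

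The main obstacle is the joint design of $A_1$ and $\alg^\star$: a naive two-atom mean-zero distribution only gives a constant-factor gap, since matching lower bounds that pair the ``dangerous'' values $v_1=c$ and $v_1\approx -c(1-\lambda)/\lambda$ (which collide at $s_1=0$) force $\ben=\Omega(c)$. I therefore expect the working construction to require either a three-or-more atom distribution with carefully tuned weights or a continuous law with multi-scale structure, so that $\alg^\star$ can genuinely extract enough information from $s_1$ beyond its sign to disambiguate the confusing $v_1$'s on which $\simple$ fails. Verifying the regret bound on $\alg^\star$ uniformly over $v_1$ (not just at the two extremal points) is where the bulk of the technical work will lie.
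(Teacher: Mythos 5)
Your outline correctly matches the paper's strategy: choose an asymmetric mean-zero $A_1$, show that $\simple$'s implicit threshold of $0$ is far from the right one, and upper-bound $\ben$ via a single-threshold (offset) rule. You also correctly diagnose why the natural two-atom construction gives only a constant gap: all the dangerous value-noise collisions land on the same observed $s_1=0$. The genuine gap is the construction that defuses this obstruction: you flag ``a continuous law with multi-scale structure'' as a candidate but do not exhibit one, and that is where the entire difficulty lies. The paper's fix is an equal-revenue tail on the positive side balanced by a heavy atom at $-1$:
\[
\Pr[A_1 \leq t] \;=\;
\begin{cases}
0 & t < -1,\\
\tfrac{1+\ln c}{2+\ln c} & -1 \leq t < 1,\\
1 - \tfrac{1}{(2+\ln c)\,t} & 1 \leq t < c,\\
1 & t \geq c.
\end{cases}
\]
This has mean zero. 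The equal-revenue tail keeps $\sup_{v<0}\Pr[A_1 \geq t - v]\cdot(-v)$ of order $1/\ln c$ for thresholds $t$ near $-1$ (the monopoly-revenue quantity is roughly constant across the whole positive support, so there is no single bad observed value), and the optimal offset $\theta(A_1)=-\tfrac{\ln c}{2+\ln c}\approx -1$ gives $\reg(\off_\theta)=\tfrac{2(1+\ln c)}{(2+\ln c)^2}=O(1/\ln c)$. Meanwhile $\simple$ at $v_1=1$ suffers regret $\tfrac{1+\ln c}{2+\ln c}\to 1$, since the noise equals $-1$ with probability close to $1$. Taking $c>e^{2M}$ finishes. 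The specific ingredient your proposal stops short of is that the equal-revenue shape is precisely what makes the offset algorithm's negative-side regret small, while the heavy negative atom forced by mean-zero is what makes $\simple$ bad; this is also the ``monopoly pricing'' connection the paper itself points to.
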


In fact, as we observe in the next two theorems, the optimal algorithm can have a complicated form even in the binary case: it is sometimes non-deterministic, and even if we restrict to deterministic algorithms, it can be non-monotone. The proofs of these theorems are presented in \cref{sec:lower}.

\begin{restatable}{theorem}{deterministic}
\label{thm:deterministic}
There is a distribution $A_1$ such that in the binary case with noise distribution $A_1$, for every deterministic $\alg$ and every $\ep>0$, $\reg(\alg) > (2-\ep)\cdot\ben$. 
\label{thm:sep_random}
\end{restatable}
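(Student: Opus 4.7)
The plan is to exhibit a single noise distribution $A_1$ for which every deterministic algorithm has $\reg(\alg) \geq 2\,\ben$; this implies the strict inequality $\reg(\alg) > (2-\ep)\,\ben$ for every $\ep > 0$, since we will also see $\ben > 0$. The candidate is $A_1$ uniform on $\{-1,+1\}$, which is unbiased and makes every observation satisfy $s_1 \in \{v_1 - 1, v_1 + 1\}$, each value attained with probability $1/2$. For the deterministic lower bound, I identify a deterministic algorithm with its acceptance set $S \subseteq \mathbb{R}$ (pick item $1$ iff $s_1 \in S$) and compare $v_1 = +1$ and $v_1 = -1$. The regret at $v_1 = +1$ equals $\tfrac{1}{2}(2 - \mathds{1}[0 \in S] - \mathds{1}[2 \in S])$, while the regret at $v_1 = -1$ equals $\tfrac{1}{2}(\mathds{1}[-2 \in S] + \mathds{1}[0 \in S])$. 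Requiring both to be strictly below $1/2$ forces $0 \in S$ and $0 \notin S$ simultaneously, a contradiction. Hence $\reg(\alg) \geq 1/2$ for every deterministic $\alg$.

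For the randomized side, I would exhibit an algorithm with acceptance probability $p(s) := \max\bigl(0, \min(1, (s+1)/2)\bigr)$ (pick item $1$ with probability $p(s_1)$ on observing $s_1$). A short case analysis bounds the regret at every $v_1$ by $1/4$: for $v_1 \in [0,2]$ we have $p(v_1+1) = 1$ and $p(v_1-1) = v_1/2$, so the regret simplifies to $v_1(2-v_1)/4$, which is maximized at $v_1 = 1$ with value $1/4$; for $v_1 \geq 2$ both observations fall in the region where $p = 1$, giving regret $0$; the cases $v_1 < 0$ are symmetric. Thus $\ben \leq 1/4$, and combining with the deterministic bound gives $\reg(\alg) \geq 1/2 = 2\,\ben$ for every deterministic $\alg$, which is the desired conclusion.

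The main obstacle is the randomized upper bound: one must design a single function $p(\cdot)$ that controls the regret uniformly over all $v_1$, not merely at the two critical points $\pm 1$. The piecewise linear interpolation above is the natural candidate, and the verification reduces to the one-variable calculus fact $\max_{v_1 \in [0,2]} v_1(2-v_1)/4 = 1/4$. As a sanity check that the constant $2$ is exactly tight on this instance, averaging the regret lower bounds at $v_1 = +1$ and $v_1 = -1$ for an arbitrary randomized $p$ gives $\ben \geq \tfrac{1}{2}\bigl((1-p(0))/2 + p(0)/2\bigr) = 1/4$, matching the upper bound.
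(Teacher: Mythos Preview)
Your proposal is correct and follows essentially the same route as the paper: the same noise distribution $A_1$ uniform on $\{-1,+1\}$, the same pivot observation $s_1=0$ to force $\reg(\alg)\ge 1/2$ for deterministic rules, and the same linear-ramp randomized rule $p(s)=\min(1,\max(0,(s+1)/2))$ to witness $\ben\le 1/4$. Your framing via the acceptance set $S$ and the simultaneous check at $v_1=\pm 1$ is just a two-sided version of the paper's ``without loss of generality $\alg$ picks item $1$ at $s_1=0$'' argument, and your additional sanity check that $\ben\ge 1/4$ (hence the factor $2$ is exactly tight) is a nice addendum the paper does not state.
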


\begin{restatable}{theorem}{monotone}
\label{thm:monotone}
For every $\ep > 0$, there is a distribution $A_1$ and a deterministic $\alg'$ for the binary case with noise distribution $A_1$, such that for every deterministic and monotone $\alg$, we have
\[
\reg(\alg) > (2 - \varepsilon) \cdot \reg(\alg'). 
\]
\end{restatable}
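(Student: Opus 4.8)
The plan is to first collapse the class of deterministic monotone algorithms to a one‑parameter family, and then, for each $\ep>0$, exhibit a single noise distribution on which every member of that family loses to a fixed deterministic algorithm by a factor arbitrarily close to $2$. In the binary case a deterministic algorithm is just a function of the one informative observation $s_1$ (since $s_2=0$ is fixed), and monotonicity forces the set $\{s_1 : \text{item }1\text{ is chosen}\}$ to be an up‑set; hence a deterministic monotone algorithm is a \emph{threshold rule} $\thr_\theta$ that picks item $1$ iff $s_1\ge\theta$, for some $\theta\in\mathbb{R}\cup\{\pm\infty\}$ (the convention at the single point $s_1=\theta$ is immaterial, as an arbitrarily small shift of $\theta$ removes it). For any deterministic $\alg$, writing $G(c):=\Pr_{a\sim A_1}[\alg\text{ picks item }1\mid v_1=c]$ for its ``acceptance probability'' at value $c$, one has
\[
\reg(\alg)\;=\;\max\!\left(\ \sup_{c>0} c\bigl(1-G(c)\bigr),\ \ \sup_{c<0}(-c)\,G(c)\ \right),
\]
and for $\alg=\thr_\theta$ the function $G(c)=\Pr_{a\sim A_1}[a\ge\theta-c]$ is just the reflected, shifted tail of $A_1$, in particular nondecreasing in $c$. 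So the task reduces to: build $A_1$ (depending on $\ep$) and a deterministic $\alg'$ with $\min_\theta\reg(\thr_\theta)>(2-\ep)\reg(\alg')$.

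The construction exploits an \emph{interleaved observation structure} that a nondecreasing $G$ cannot handle but a non‑monotone one can. I would take $A_1$ to be a discrete distribution on $O(1)$ atoms, tuned so that there is a fixed observation $x^\star$ which is simultaneously a plausible reading for a sizeable positive intrinsic value $v^+$ (reached through a negative noise atom) and for a moderately negative value $v^-$ (reached through a positive noise atom), and so that, in addition, $v^-$ owns a \emph{second} plausible observation $x^{\star\star}$ located just \emph{above} $x^\star$. A threshold rule is then trapped. If it accepts $x^\star$ (i.e.\ $\theta\le x^\star$), monotonicity forces it to also accept $x^{\star\star}$, so it is charged the regret of $v^-$ \emph{through two distinct noise realizations} at once — the acceptance probability $G(v^-)$ picks up the weights of \emph{both} relevant atoms — giving regret about $2R$, where $R$ is the single‑charge cost $(-v^-)\cdot(\text{one atom weight})$. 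If instead it rejects $x^\star$ (i.e.\ $\theta>x^\star$), then by monotonicity it rejects $x^\star$ together with every lower observation, and the atom weights and magnitudes are chosen — consistently with the mean‑zero constraint — so that the readings of $v^+$ caught below $\theta$ already force regret about $2R$ on $v^+$. Either way $\reg(\thr_\theta)\ge(2-o(1))R$, uniformly in $\theta$. The fixed algorithm $\alg'$ is the explicit non‑monotone rule that accepts $x^\star$ but \emph{rejects the narrow band containing $x^{\star\star}$} (that band is chosen, by a genericity choice of the atom positions, not to be a reading of any positive value, so rejecting it is free), accepts all sufficiently large observations, and rejects all sufficiently negative ones; it is therefore charged only the single cost $R$ and one checks $\reg(\alg')\le(1+o(1))R$. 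Rescaling $\ep$ at the end delivers the stated bound.

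The main obstacle — and the bulk of the work — is the verification that $\reg(\alg')$ really is only $(1+o(1))R$: one must bound the regret of $\alg'$ at \emph{every} $v_1\in\mathbb{R}$, not just at the two ``designed'' values, and in particular control (i) moderately negative values whose highest reading just pokes into $\alg'$'s lower acceptance region and (ii) values whose lowest reading falls inside $\alg'$'s rejected band; keeping both within budget $R$ is exactly what pins down the atom locations/weights and the width and placement of the band, and is the part that has to be done by hand. On the threshold side the argument is cleaner: for each $\theta$ one simply names a bad value $v_1$ — either $v^-$ (the double charge, if $\theta\le x^\star$) or a large positive value whose readings straddle $\theta$ (if $\theta>x^\star$) — and takes the worse of the two; since the magnitude of the far atom is essentially forced by $\E[A_1]=0$ once its weight is fixed, this lower bound is genuinely uniform in $\theta$. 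A final bookkeeping step handles the boundary convention for $\thr_\theta$ and the infimum/supremum in the definitions of $\reg$ and $\ben$; these contribute only an extra additive $O(\ep)$ and are absorbed harmlessly.
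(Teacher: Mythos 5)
Your plan diverges from the paper's proof and, I believe, has a genuine gap. The paper takes the noise to be \emph{continuous}, uniform on $[-1,-1+\alpha]\cup[1-\alpha,1]$, precisely so that a deterministic, non-monotone $\alg'$ can \emph{dither}: it partitions $\mathbb{R}$ into width-$\alpha$ cells and accepts a tuned fraction of each cell, and since the conditional law of $s_1$ within a cell is uniform, this deterministic rule reproduces the acceptance probabilities of the randomized algorithm from \cref{thm:deterministic}. Continuity is load-bearing here: it is the mechanism by which a deterministic rule emulates randomization, and it is what drives the factor to $2-\ep$ as $\alpha\to 0$.

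Your plan instead uses a discrete noise on $O(1)$ atoms and a ``hole'' in $\alg'$'s acceptance set just above $x^\star$. The assertion that the hole can be placed ``by a genericity choice of the atom positions, not to be a reading of any positive value'' is false: for any negative atom $a_{\min}<0$ and any $x$ in the hole, the value $v_1=x-a_{\min}$ is positive and reads $x$ through that atom. So rejecting the hole is never free, and the step you flag as ``the bulk of the work'' is exactly where the construction breaks. In the minimal gadget with atoms $\{-a,b_1,b_2\}$ ($b_1<b_2$, weights $w_{-a},w_{b_1},w_{b_2}$, mean zero) and $v^-=-b_1$, $v^+=a$, the threshold double-charge $(-v^-)(w_{b_1}+w_{b_2})\ge(2-o(1))R$ with $R=(-v^-)w_{b_1}$ forces $w_{b_2}\ge(1-o(1))w_{b_1}$. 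If the hole has width $\eta$, then $v_1=-b_1+\eta^{+}$ reads both $\eta^{+}$ and $(b_2-b_1)+\eta^{+}$ in the acceptance set, so $\reg(\alg',v_1)\approx(b_1-\eta)(w_{b_1}+w_{b_2})$; keeping this below $(1+o(1))R$ forces $\eta\ge(1/2-o(1))b_1$. But then $v_1=a+x$ for $x$ near the top of the hole reads $x$ via $-a$ and is rejected there, giving $\reg(\alg',v_1)\ge v_1\,w_{-a}\ge a\,w_{-a}+(1/2-o(1))b_1 w_{-a}$, while mean zero gives $a\,w_{-a}=b_1w_{b_1}+b_2w_{b_2}\approx 2R$. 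Hence $\reg(\alg')\ge 2R$ and the claimed separation collapses. The tension is structural: the zero-mean constraint ties the mass on the far negative atoms to exactly the positive mass the threshold lower bound already needs to be $\approx 2R$, and the hole hands that same $2R$ back to $\alg'$. A bounded number of atoms cannot escape this; letting the atom count grow to discretize two short intervals simply rediscovers the paper's dithering argument.
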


Note that in the binary case, a deterministic and monotone algorithm has a particularly simple form: it can be described by a simple threshold $t$ (which depends on the noise distribution $A_1$), and picks item 1 if and only if $s_1\ge t$. In fact, as we will observe in \cref{subsec:theta}, the optimal such threshold $t$ can be written in closed form. Unfortunately, the above results show that this simple algorithm can be sub-optimal. But could it be {\em approximately} optimal?

We prove that this is indeed the case. In fact, even for the non-binary case, we can prove that a generalization of this simple algorithm provides a constant approximation of regret. The algorithm, which we call $\off_\theta$, is parameterized by a function $\theta$ that converts each noise distribution $A_i$ into a real-valued {\em offset} $\theta_i := \theta(A_i)$. It then picks $\argmax_{i \in N} (s_i - \theta_i)$. 
Notice that $\simple$ is $\off_\theta$ with the choice of $\theta(A) =0$. It turns out that with a proper choice of function $\theta$, $\off_\theta$ gives a constant-approximation to $\ben$.

\begin{theorem}
\label{thm:main}
For the choice of $\theta$ described in \cref{subsec:theta}, we have
\[
\reg(\off_\theta) \leq O(1) \cdot \ben,
\]
for any $n$ and any choice of noise distributions $\vec{A}$.
\end{theorem}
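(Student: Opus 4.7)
The plan is to prove the theorem by deriving matching upper and lower bounds on $\reg(\off_\theta)$ and $\ben$ in terms of a common benchmark quantity $Q$ depending only on the noise distributions $\vec{A}$ and the offsets $\theta_i = \theta(A_i)$.

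\textbf{Upper bound.} Define $Y_i := a_i - \theta_i$. Since $\off_\theta$ selects $\argmax_i (s_i - \theta_i) = \argmax_i (v_i + Y_i)$, whenever the algorithm's pick $\alg(\vec{s}) = j$ we have $v_j + Y_j \geq v_{i^*} + Y_{i^*}$ for $i^* := \argmax_i v_i$, which rearranges pointwise to
\[
v^* - v_{\alg(\vec{s})} \;\leq\; Y_{\alg(\vec{s})} - Y_{i^*} \;\leq\; \max_i Y_i - Y_{i^*}.
\]
Taking expectation and using $\E[Y_{i^*}] = -\theta_{i^*}$ (by unbiasedness of the noise) gives $\reg(\off_\theta, \vec{v}) \leq \E[\max_i Y_i] + \theta_{i^*}$; taking the supremum over $\vec{v}$ (which allows any $i^* \in N$) yields $\reg(\off_\theta) \leq Q$ with $Q := \E[\max_i Y_i] + \max_i \theta_i$. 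In the full proof this naive bound will be refined using the binary-case tail inequalities from \cref{sec:binary}, namely $\Pr[a_i \geq \theta_i + x] \leq O(R^*_i)/x$ and $\Pr[a_i \leq \theta_i - x] \leq O(R^*_i)/x$ (where $R^*_i$ denotes the optimal binary threshold-algorithm regret for noise $A_i$), so that the bound remains meaningful even when $\E[\max_i Y_i]$ has heavy-tailed behavior.

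\textbf{Lower bound.} I would combine two constructions to show $\ben \geq \Omega(Q)$. For $\ben \geq \Omega(\max_i \theta_i)$: restrict to the binary sub-instance where only the item attaining $\max_j \theta_j$ carries a nonzero value; the binary analysis together with the near-optimality of $\theta_i$ forces any algorithm to incur regret $\Omega(\theta_i)$ on this sub-instance. For $\ben \geq \Omega(\E[\max_i Y_i])$: apply Yao's minimax with the family of value vectors $\vec{v}^{(k)} := \Delta \cdot e_k$ at the critical scale $\Delta = \Theta(\E[\max_i Y_i])$. Under the uniform prior over $k \in N$, argue that no algorithm can identify the shifted item with accuracy better than $1 - \Omega(1)$, because with constant probability some non-shifted $Y_j$ exceeds $\Delta + Y_k$, masking the signal and forcing expected regret $\Omega(\Delta)$.

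\textbf{Main obstacle.} The most delicate step is the second lower bound—formalizing the signal-masking argument. The per-coordinate binary tail bounds control each individual $Y_i$, but linking the distribution of $\max_i Y_i$ to the indistinguishability of the observation distributions $\{\vec{s}\mid K=k\}_{k \in N}$ requires a careful hypothesis-testing / averaging argument at the critical scale $\Delta$. Heavy-tailed distributions, where $\E[\max_i Y_i]$ and the typical value of $\max_i Y_i$ can diverge substantially, need particular care and likely rely on the $\varphi_i$-boundedness assumption from \cref{sec:prelim}.
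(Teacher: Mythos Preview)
Your proposed benchmark $Q = \E[\max_i Y_i] + \max_i \theta_i$ is not within a constant factor of $\ben$, so the plan cannot close. Take the binary instance from the proof of \cref{thm:expectation}: item $2$ is noiseless ($\theta_2 = 0$, $Y_2 \equiv 0$) and item $1$ carries the equal-revenue-type noise with parameter $c$. Then $\theta_1 = -\ln c/(2+\ln c) \in (-1,0)$, so $\max_i \theta_i = 0$, while $\E[\max(Y_1,0)] = \E[Y_1^+] \ge \E[Y_1] = -\theta_1 \to 1$ as $c \to \infty$; hence $Q \to 1$. But on that same instance $\reg(\off_\theta) = 2(1+\ln c)/(2+\ln c)^2 = \Theta(1/\ln c)$, and by \cref{thm:binary} so is $\ben$. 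Thus $Q/\ben \to \infty$, and the claimed lower bound $\ben \ge \Omega(\E[\max_i Y_i])$ is false already for $n=2$. Your signal-masking construction cannot repair this: in the binary case there is no ``other $Y_j$'' available to mask anything, yet $Q$ is still of constant order. (The mirror of this distribution likewise gives $\theta_1 \to +1$ with $\ben = \Theta(1/\ln c)$, so the $\max_i\theta_i$ half of $Q$ fails as well.) The pointwise inequality $v^{*} - v_{\alg(\vec{s})} \le \max_i Y_i - Y_{i^{*}}$ is correct but hopelessly loose here, because large realizations of $Y_i$ are almost always accompanied by small value gaps, not the worst-case gap.

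The underlying issue is that $\E[\max_i Y_i]$ is a welfare-type quantity, whereas both $\reg(\off_\theta)$ and $\ben$ scale like the revenue-type quantity $\max_v v\cdot\Pr[Y_i \ge v]$; equal-revenue tails are precisely where the two diverge unboundedly. The paper accordingly never passes through $\E[\max_i Y_i]$. It (i) reduces to the case where the top item is noiseless by splitting the error event at the midpoint value and handling the ``top item's noise too low'' half via the binary theorem; (ii) upper-bounds $\reg(\off_\theta)$ by the linearized, revenue-style benchmark $b = \sum_i \Pr[a_i \ge \theta_i - v_i^{*}]\,(-v_i^{*})$ for suitably chosen $v_i^{*}$ under a total-probability budget; and (iii) lower-bounds $\ben$ not by a single shift $\Delta\, e_k$ but by randomizing each $v_i$ over a short interval so that the convolution $V_i + A_i$ contains a common uniform block near $\theta_i$ regardless of whether $v_i$ is mildly positive or roughly $v_i^{*}$, forcing every algorithm to err on each coordinate separately.
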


The proof of the above theorem, which is our main result, is presented in \cref{sec:binary} in the binary case, and in \cref{sec:general} in the general case. The constant approximation factor we can prove in the binary case is $24$, and in the general case, whose proof uses the binary result, is $57000$. 

The algorithm $\off_\theta$ is almost as simple and efficient as $\simple$. It only needs to store one scalar, $\theta_i$, that summarizes all it needs to know about the noise distribution $A_i$, and does not need to consider any {\em interaction} between different noise distributions. With reasonable representations of the $A_i$ distributions (e.g., a list representation when the support is finite) it can be computed in linear time.


Also, \cref{thm:main} implies the following curious corollary: when all noise distributions are symmetric, even if they are wildly  heterogeneous, the plain $\simple$ algorithm achieves a constant approximation. 

\begin{corollary}
\label{cor:symm}
When all noise distributions are symmetric, $\simple$ is a constant-approximation to the optimal regret.
\end{corollary}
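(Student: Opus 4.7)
The plan is to deduce \cref{cor:symm} directly from \cref{thm:main}. The strategy is to show that whenever a noise distribution $A$ is symmetric about $0$, the offset $\theta(A)$ produced by the construction in \cref{subsec:theta} is exactly $0$. Once this is established, every $\theta_i$ vanishes in any instance whose noise distributions are all symmetric, so $\off_\theta$ picks $\argmax_i(s_i - \theta_i) = \argmax_i s_i$, which is precisely $\simple$. The constant-factor guarantee $\reg(\off_\theta) \le O(1)\cdot\ben$ from \cref{thm:main} then transfers verbatim to $\simple$.

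To establish $\theta(A) = 0$ for symmetric $A$, I would invoke a sign-flip symmetry of the construction. The offset in \cref{subsec:theta} comes out of the binary subproblem of comparing a single noisy item (with noise $A$) against a deterministic baseline at $0$, and this subproblem is invariant under simultaneously negating the intrinsic-value axis and the noise: the instance with noise $-A$ and intrinsic value $-v_1$ produces exactly the mirror image of the instance with noise $A$ and value $v_1$, and its optimal threshold is the negation of the original. Consequently the canonical offset satisfies the odd relation $\theta(-A) = -\theta(A)$. When $A$ is symmetric we have $A = -A$ in distribution, so $\theta(A) = \theta(-A) = -\theta(A)$, forcing $\theta(A) = 0$.

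The main obstacle is checking the sign-flip identity $\theta(-A) = -\theta(A)$ against the explicit formula in \cref{subsec:theta}. Because that formula is of $\argmax$-type (the ``monopoly-pricing'' flavour mentioned earlier), one needs to verify that the tie-breaking convention respects the symmetry, and that the formula does not implicitly privilege one sign of the noise over the other. For any natural such convention this reduces to a direct substitution; if needed, one can alternatively fall back on a quantitative version of the argument, coupling $\off_\theta$ to $\simple$ and using the symmetry of each $A_i$ to show that replacing each offset by its negation only changes the regret by a constant factor, which combined with $\reg(\off_\theta)\le O(1)\cdot\ben$ again yields the corollary.
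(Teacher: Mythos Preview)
Your proposal is correct and matches the paper's approach exactly. The paper likewise observes that for symmetric $D$ one has $\theta(D)=0$, hence $\off_\theta$ coincides with $\simple$, and then invokes \cref{thm:main}; your sign-flip argument $\theta(-A)=-\theta(A)$ is one clean way to verify $\theta(D)=0$ and the tie-breaking worry is harmless since the objective $g(t)$ has $f_+(0)=f_-(0)$ for symmetric $D$, so $t=0$ is the canonical balancing point the paper assumes.
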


\paragraph{Technical Highlight.} Our main algorithm, $\off_\theta$, naturally arises in the discussion of the special case of our binary setting. However, it is a challenging task to show that it guarantees a constant approximation ratio. A recurring theme in our proof is to characterize the noise distributions using scalar quantities, and then design pairs of hard instances parameterized by those scalars where no algorithm can perfectly distinguish them because of the noises. The hard instances for the optimal algorithm use ideas of convolutions of functions to hide common parts of observed values and thus are indistinguishable by any algorithm. This in turn establishes that no algorithm can do much better than ours. We first look at a simpler $2$-item setting to gain intuition and obtain partial results. Then in the general case, we gradually analyze and simplify our problem until we reach a manageable formulation.

\section{The Binary Setting}
\label{sec:binary}
In this section, we prove our main result (\cref{thm:main}) in the binary case. We will later use this result for proving the general case. The proof in this section also help illustrate some ideas in the general proof.

\subsection{Choice of $\theta$}
\label{subsec:theta}
In the binary case, the value $\theta$ is a threshold, so that item 1 is picked if and only if the observed value $s_1$ is greater than this threshold. We will have a regret in two cases: when $v_1 >0$ but the algorithm does not pick 1, and when $v_1 < 0$ and the algorithm picks 1. The former case happens with probability $\Pr_{a_1\sim A_1}[v_1 + a_1\le \theta]$ and the magnitude of regret is $v_1$. The latter case happens with probability $\Pr_{a_1\sim A_1}[v_1 + a_1> \theta]$ and the magnitude of regret is $(-v_1)$. Therefore, the overall regret at threshold $\theta$ can be written as:

$$\max\left(\max_{v_1 > 0}\big(\Pr_{a_1\sim A_1}[a_1\le \theta-v_1] \cdot v_1\big), \max_{v_1 < 0}\big(\Pr_{a_1\sim A_1}[a_1> \theta-v_1] \cdot (-v_1)\big)\right).$$
We choose $\theta$ as the value that minimizes the above regret:
\[
\theta(D) := \argmin_{t \in \mathbb{R}} \max\left(\max_{v > 0}\big(\Pr[D \leq t - v] \cdot v\big), \max_{v < 0}\big(\Pr[D \geq t - v] \cdot (-v)\big)\right).
\]
We will use the notation $\theta(D)$ to refer to this specific choice in subsequent discussion.

Without loss of generality, we assume $\max_{v > 0}\big(\Pr[D \leq \theta(D) - v] \cdot v\big) = \max_{v < 0}\big(\Pr[D \geq \theta(D) - v] \cdot (-v)\big)$ to simplify future discussion. This naturally holds if $D$ is a continuous distribution, and we can slightly perturb $D$ if not.

We next show that our choice of $\theta$ gives a constant-approximation to $\ben$ in the binary setting.

\begin{theorem}
In the binary setting, we have
\[
\reg(\off_\theta) \leq 24 \cdot \ben.
\]
\label{thm:binary}
\end{theorem}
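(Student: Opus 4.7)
My plan is to lower-bound $\ben$ by constructing a two-instance hard distribution on $v_1$, reducing the algorithm's best response to an overlap integral, and then bounding this integral in terms of $R := \reg(\off_\theta)$. After translating $s_1$ I may assume $\theta(A_1) = 0$, so $\off_\theta$ picks item $1$ exactly when $s_1 > 0$. By the definition of $\theta$ in \cref{subsec:theta} and the assumed equality of the two maxima, there exist $V_+, V_- > 0$ with
\[
V_+ \cdot \Pr[A_1 \le -V_+] \;=\; V_- \cdot \Pr[A_1 \ge V_-] \;=\; R.
\]
The value $v_1 = V_+$ is the worst-case ``positive'' adversarial intrinsic value (the optimum is $V_+$, but with probability $R/V_+$ the noise drags $s_1$ below $0$ and $\off_\theta$ errs), and $v_1 = -V_-$ is the symmetric worst-case negative value.

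For $\alpha \in (0,1)$, let $D_\alpha$ put weight $\alpha$ on $v_1 = V_+$ and $1-\alpha$ on $v_1 = -V_-$. Any randomized algorithm is specified by a function $p : \mathbb{R}\to[0,1]$, where $p(s)$ is its probability of picking item $1$ given observation $s_1 = s$. A pointwise minimization over $p$---the Bayes-optimal response is a likelihood-ratio test---yields
\[
\ben \;\ge\; \min_{p}\; \E_{v_1 \sim D_\alpha}\!\bigl[\reg(p, v_1)\bigr] \;=\; \int \min\!\bigl(\alpha\, V_+\, f(s - V_+),\;(1-\alpha)\, V_-\, f(s + V_-)\bigr)\,ds,
\]
where $f$ denotes the density (or mass function) of $A_1$. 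Writing $g_1(s) := V_+ f(s - V_+)$ and $g_2(s) := V_- f(s + V_-)$, the crucial identity
\[
\int_{-\infty}^{0} g_1\,ds \;=\; V_+ \Pr[A_1 \le -V_+] \;=\; R \;=\; V_- \Pr[A_1 \ge V_-] \;=\; \int_{0}^{\infty} g_2\,ds
\]
says that each scaled density places mass $R$ on the side of $0$ favored by its opponent; this is the inescapable source of overlap any algorithm must pay for.

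The remaining task is to exhibit a choice of $\alpha$ for which $\int \min(\alpha\, g_1,\,(1-\alpha)\, g_2)\,ds \ge R/24$. I anticipate a case split: when $A_1$ is sufficiently ``spread'' that $g_1$ and $g_2$ share substantial mass on an interval, a direct bound on the overlap suffices; when the supports of $g_1$ and $g_2$ are nearly disjoint, $A_1$ must be close to a two-point distribution (the extremal shape), and the balancing choice $\alpha V_+ f(-V_+) = (1-\alpha)V_- f(V_-)$ at the common point $s = 0$ recovers a constant fraction of $R$. The asymmetric two-point family $A_1 \in \{-M,\varepsilon M/(1-\varepsilon)\}$ already saturates at the ratio $2$, so the factor $24$ in the theorem statement has slack to be absorbed by routine bookkeeping; perturbing $A_1$ to continuity (as the paper permits) removes measurability nuisances. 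The main obstacle is stitching both regimes into one clean argument that produces an explicit universal constant.
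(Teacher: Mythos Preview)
Your setup is correct through the overlap-integral lower bound
\[
\ben \;\ge\; \int \min\bigl(\alpha\,g_1(s),\,(1-\alpha)\,g_2(s)\bigr)\,ds,
\]
but the assertion that some choice of $\alpha$ makes this integral at least $cR$ for a universal $c>0$ is false: a two-point hard distribution on $v_1$ cannot work. Take $A_1$ symmetric with density $c$ on $[-2,-1]\cup[1,2]$ and the remaining mass $1-2c$ at (or in an $\epsilon$-neighborhood of) $0$. Then $\theta(A_1)=0$, and one checks $V_+=V_-=1$, $R=c$. Now $g_1=f(\cdot-1)$ is supported on $[-1,0]\cup\{1\}\cup[2,3]$ while $g_2=f(\cdot+1)$ is supported on $[-3,-2]\cup\{-1\}\cup[0,1]$; these meet only in the Lebesgue-null set $\{-1,0,1\}$, so $\int\min(\alpha g_1,(1-\alpha)g_2)=0$ for every $\alpha$. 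Smoothing the central atom to width $\epsilon$ makes the overlap $O(c\epsilon)$, still negligible compared to $R=c$. Your ``nearly disjoint $\Rightarrow$ close to two-point'' heuristic fails here ($A_1$ is three-piece), and the ``balancing at $s=0$'' step invokes a density value at a single point, which contributes nothing to the integral. The identity $\int_{s\le 0}g_1=R=\int_{s\ge 0}g_2$ places mass on opposite half-lines; it does not force any pointwise overlap.

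The paper's proof supplies exactly the missing idea: randomize $v_1$ over an \emph{interval} rather than a point. Drawing $v_1\sim U[0.5V_+,1.5V_+]$ makes the law of $s_1$ a convolution, and the optimality of $V_+$ yields the tail bound $\Pr[A_1\le -\lambda V_+]\le\lambda^{-1}\Pr[A_1\le -V_+]$ (\cref{lem:tail}), which pins at least a third of the wrong-side mass to $[-1.5V_+,-V_+]$; the convolution then has density bounded below on all of $[-0.5V_+,0]$ (\cref{lem:uniform}). When $V_-\neq V_+$ one side needs a mixture of several shifted uniforms so that both constructions cover a \emph{common} interval of length $0.5\max(V_+,V_-)$; this is \cref{lem:distinguish}, where the bookkeeping produces the constant $24$. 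Your approach cannot recover this because fixing $v_1$ exposes the raw density $f$, which may be arbitrarily ragged subject only to the argmax constraint you never used.
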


\subsection{Proof of \cref{thm:binary}}

Fix any noise distribution $D$. Let $v^+ := \argmax_{v > 0}\big(\Pr[D \leq t - v] \cdot v\big)$, and let $v^- := \argmax_{v < 0}\big(\Pr[D \geq t - v] \cdot (-v)\big)$. \cref{lem:tail} states tail bounds for $D$: It must be somewhat concentrated around $\theta(D) - v^+$ and $\theta(D) - v^-$.

\begin{lemma}
For any $\lambda > 1$, $\Pr_{a \sim D}[a \leq \theta(D) - \lambda \cdot v^+ \mid a \leq \theta(D) - v^+] \leq \frac{1}{\lambda}$. Similarly, for any $\lambda > 1$, $\Pr_{a \sim D}[a \geq \theta(D) - \lambda \cdot v^- \mid a \geq \theta(D) - v^-] \leq \frac{1}{\lambda}$. 
\label{lem:tail}
\end{lemma}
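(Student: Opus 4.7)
The plan is to derive both tail bounds directly from the optimality of $v^+$ and $v^-$ as argmaxes in the definitions given just before the lemma. I focus on the first inequality; the second follows by a symmetric argument.

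First I would rewrite the conditional probability. Assuming $\Pr[a \leq \theta(D) - v^+] > 0$ (otherwise there is nothing to prove), the inequality
\[
\Pr_{a \sim D}[a \leq \theta(D) - \lambda v^+ \mid a \leq \theta(D) - v^+] \leq \tfrac{1}{\lambda}
\]
is equivalent to
\[
\lambda \cdot \Pr[a \leq \theta(D) - \lambda v^+] \leq \Pr[a \leq \theta(D) - v^+].
\]
Multiplying both sides by $v^+ > 0$, this in turn is equivalent to
\[
\Pr[a \leq \theta(D) - \lambda v^+] \cdot (\lambda v^+) \;\leq\; \Pr[a \leq \theta(D) - v^+] \cdot v^+.
\]

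Next I would observe that this last inequality is exactly a case of the defining optimality of $v^+$. By construction, $v^+ \in \argmax_{v > 0}\bigl(\Pr[D \leq \theta(D) - v] \cdot v\bigr)$. Since $\lambda > 1$ and $v^+ > 0$, the value $\lambda v^+$ is itself a positive real number, hence a feasible candidate in this optimization. Therefore the objective evaluated at $v = \lambda v^+$ is at most the objective evaluated at $v = v^+$, which is precisely the displayed inequality. The second claim follows by the parallel argument: $\lambda v^-$ is a negative real since $v^- < 0$ and $\lambda > 1$, so it is a feasible candidate in $\argmax_{v < 0}\bigl(\Pr[D \geq \theta(D) - v] \cdot (-v)\bigr)$, and the analogous chain of equivalences yields the bound.

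I do not anticipate any real obstacle, since the lemma is essentially a restatement of the maximality of $v^+$ and $v^-$. The only mild care needed is the edge case when the conditioning event has zero probability (trivial) and the assumption that $v^+$ and $v^-$ actually attain the supremum rather than only approach it; the earlier normalization of $D$ (allowing continuous perturbations so that the two outer maxes in the definition of $\theta(D)$ are equal) makes this standard, and one could alternatively prove the inequality for any $\epsilon$-near-optimizer and take $\epsilon \to 0$.
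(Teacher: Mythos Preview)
Your proposal is correct and matches the paper's proof essentially line for line: both derive the inequality $\Pr[a \leq \theta(D) - v^+]\cdot v^+ \geq \Pr[a \leq \theta(D) - \lambda v^+]\cdot \lambda v^+$ directly from the optimality of $v^+$ (and symmetrically for $v^-$), then rearrange to obtain the conditional-probability bound. Your handling of the edge cases is a nice addition but not needed beyond what the paper already assumes.
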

\begin{proof}
By definition of $v^+$, we know
\[
\Pr_{a \sim D}[a \leq \theta(D) - v^+] \cdot v^+ \geq \Pr_{a \sim D}[a \leq \theta(D) - \lambda \cdot v^+] \cdot \lambda \cdot v^+.
\]
Rearranging gives the first statement.

Similarly, by definition of $v^-$, we know
\[
\Pr_{a \sim D}[a \geq \theta(D) - v^-] \cdot (-v^-) \geq \Pr_{a \sim D}[a \geq \theta(D) - \lambda \cdot v^-] \cdot (-\lambda \cdot v^-).
\]
Rearranging gives the second statement.
\end{proof}

Denote $U[\ell, r]$ to be the uniform distribution supported on the interval $[\ell, r]$. In \cref{lem:uniform}, we show that the noise plus a carefully-chosen uniformly-distributed value can ``hide'' another uniform distribution, which we will ultimately utilize to show no algorithm can distinguish between difference cases and thus give a lower bound of $\ben$ in \cref{lem:distinguish}.

\begin{lemma}
Fix any noise distribution $A$. Let $V$ be the uniform distribution $U[0.5v^+, 1.5v^+]$. Then the probability density function (PDF) of $V + A$ satisfies:
\[
f_{V + A}(t) \geq \frac{1}{3} \cdot \frac{1}{v^+} \cdot \Pr_{a \sim A}[a < \theta(A) - v^+]
\]
for any $t \in [\theta(A) - 0.5v^+, \theta(A)]$.

Similarly, let $V'$ be the uniform distribution $U[1.5v^-, 0.5v^-]$. Then the PDF of $V' + A$ satisfies:
\[
f_{V' + A}(t) \geq \frac{1}{3} \cdot \frac{1}{-v^-} \cdot \Pr_{a \sim A}[a > \theta(A) - v^-]
\]
for any $t \in [\theta(A), \theta(A) - 0.5v^-]$.
\label{lem:uniform}
\end{lemma}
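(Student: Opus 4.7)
The plan is to reduce the claimed density bound to a probability-mass bound on an interval via the convolution formula, and then invoke Lemma~\ref{lem:tail} with an appropriately chosen parameter. Concretely, since $V$ is uniform on an interval of length $v^+$ with density $1/v^+$, convolving with $A$ gives
\[
f_{V+A}(t) \;=\; \frac{1}{v^+}\, \Pr_{a \sim A}\bigl[\,a \in [t - 1.5 v^+,\; t - 0.5 v^+]\,\bigr].
\]
So the task becomes: lower bound the mass of $A$ on a sliding window of width $v^+$ as $t$ varies over $[\theta(A) - 0.5 v^+, \theta(A)]$.

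The key geometric observation I would make is that across this range of $t$, every such window contains the \emph{fixed} sub-interval $[\theta(A) - 1.5 v^+, \theta(A) - v^+]$. The two endpoint inequalities needed for containment, namely $t \leq \theta(A)$ and $t \geq \theta(A) - 0.5 v^+$, are precisely the constraints on $t$. This lets me pull the bound out from under the $t$-dependence and reduce the whole claim to a single concrete mass estimate, independent of $t$.

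For that estimate, I would apply Lemma~\ref{lem:tail} with $\lambda = 1.5$, which gives $\Pr[A \leq \theta(A) - 1.5 v^+] \leq \tfrac{2}{3}\, \Pr[A \leq \theta(A) - v^+]$, and therefore $\Pr[\theta(A) - 1.5 v^+ < A \leq \theta(A) - v^+] \geq \tfrac{1}{3}\, \Pr[A \leq \theta(A) - v^+] \geq \tfrac{1}{3}\, \Pr[A < \theta(A) - v^+]$. Plugging this back into the convolution identity yields the first inequality. The statement for $V' = U[1.5 v^-, 0.5 v^-]$ (recall $v^- < 0$, so this is a genuine uniform distribution of length $-v^-$) follows by an entirely symmetric argument, using the second half of Lemma~\ref{lem:tail} with the same $\lambda = 1.5$ to bound $\Pr[A \geq \theta(A) - 1.5 v^-]$ from above.

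The only step that required any thought is spotting that the whole family of shifted windows shares a common sub-interval of width $0.5 v^+$ sitting to the left of $\theta(A) - v^+$; once that is identified the choice $\lambda = 1.5$ in the tail bound is essentially forced by the width of the window, and the rest is a routine bookkeeping computation. I do not expect any real obstacle.
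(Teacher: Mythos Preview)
Your proposal is correct and essentially identical to the paper's proof: both use the convolution formula to get $f_{V+A}(t) = \tfrac{1}{v^+}\int_{t-1.5v^+}^{t-0.5v^+} f_A$, observe that for $t \in [\theta(A)-0.5v^+, \theta(A)]$ this window always contains $[\theta(A)-1.5v^+,\theta(A)-v^+]$, and then apply Lemma~\ref{lem:tail} with $\lambda=1.5$ to extract the factor $1-\tfrac{1}{1.5}=\tfrac{1}{3}$.
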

\begin{proof}
We prove the first statement and a similar argument works for the second.

For any $t \in [\theta(A) - 0.5v^+, \theta(A)]$,
\begin{align*}
f_{V + A}(t) &= \int_{0.5v^+}^{1.5v^+} f_V(q) \cdot f_A(t - q) \d q\\
&= \frac{1}{v^+} \cdot \int_{0.5v^+}^{1.5v^+} f_A(t - q) \d q\\
&= \frac{1}{v^+} \cdot \int_{t - 1.5v^+}^{t - 0.5v^+} f_A(q) \d q\\
&\geq \frac{1}{v^+} \cdot \int_{\theta(A) - 1.5v^+}^{\theta(A) - v^+} f_A(q) \d q\\
&\geq \frac{1}{v^+} \cdot \left(1 - \frac{1}{1.5}\right) \cdot \Pr[a < \theta(A) - v^+],
\end{align*}
where we used \cref{lem:tail} in the last step.
\end{proof}

Without loss of generality, assume $v^+ \leq -v^-$ for the noise distribution $A_1$. (Otherwise we can flip $A_1$.)
\begin{lemma}
Define $k := \left\lceil \frac{-v^-}{v^+} \right\rceil$. We have $\ben \geq \frac{1}{24} \cdot \reg(\off_\theta)$ on either of the following two instances:
\begin{enumerate}
    \item $v_1$ is drawn from $V$, which is the uniform average of the following distributions: $U[v^+, 2v^+]$, $U[1.5v^+, 2.5v^+]$, $\ldots$, $U[(0.5 + 0.5k)v^+$, $(1.5 + 0.5k)v^+]$.
    \item $v_1$ is drawn from $V' = U[1.5v^-, 0.5v^-]$.
\end{enumerate}
\label{lem:distinguish}
\end{lemma}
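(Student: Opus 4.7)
The plan is to use the two instances jointly as a Bayesian lower bound on $\ben$. Any (possibly randomized) algorithm can be summarized by a measurable function $\mu : \mathbb{R} \to [0,1]$, where $\mu(s)$ is its probability of picking item~1 upon observing $s_1 = s$. Writing $\theta := \theta(A_1)$, $p_+ := \Pr[a_1 \leq \theta - v^+]$, $p_- := \Pr[a_1 \geq \theta - v^-]$, and $\mathcal{R}_i$ for the Bayesian regret of $\alg$ on instance~$i$, we have $R := \reg(\off_\theta) = v^+ p_+ = (-v^-) p_-$ and $\mathcal{R}_i \leq \reg(\alg)$; it therefore suffices to show $\max(\mathcal{R}_1, \mathcal{R}_2) \geq R / 24$ for every $\alg$.

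First I would express the two Bayesian regrets in a convenient form. On instance~1, every $v_1$ in the support of $V$ satisfies $v_1 \geq v^+$ and regret is incurred exactly when item~1 is not picked, so $\mathcal{R}_1 \geq v^+ \int (1 - \mu(s)) f_{V + A_1}(s) \, ds$. Symmetrically, on instance~2, every $v_1$ in the support of $V'$ satisfies $-v_1 \geq 0.5(-v^-)$ and regret is incurred exactly when item~1 is picked, so $\mathcal{R}_2 \geq 0.5(-v^-) \int \mu(s) f_{V' + A_1}(s) \, ds$. Next I would restrict both integrals to the common window $I := [\theta, \theta + 0.5(-v^-)]$ and lower bound the two densities on $I$ via \cref{lem:uniform}. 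The lemma gives $f_{V' + A_1}(s) \geq \frac{p_-}{3(-v^-)}$ on $I$ directly. For $V + A_1$, I would write each of the $k$ components of $V$ as a translate of $U[0.5 v^+, 1.5 v^+]$ by $0.5 i v^+$ for $i = 1, \ldots, k$; the shifted form of \cref{lem:uniform} then bounds each component density below by $\frac{p_+}{3 v^+}$ on a sub-interval of length $0.5 v^+$, these sub-intervals tile $[\theta, \theta + 0.5 k v^+]$, and their union contains $I$ since $k v^+ \geq -v^-$. Averaging over the $k$ components thus yields $f_{V + A_1}(s) \geq \frac{p_+}{3 k v^+}$ on $I$.

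The final step is to balance the two contributions. From $k = \lceil -v^- / v^+ \rceil \leq -v^- / v^+ + 1$ and the standing assumption $v^+ \leq -v^-$ one gets $k v^+ \leq -v^- + v^+ \leq 2(-v^-)$, so the per-unit-length weight in the lower bound for $\mathcal{R}_1$ is $v^+ \cdot \frac{p_+}{3 k v^+} = \frac{p_+}{3 k} \geq \frac{v^+ p_+}{6(-v^-)} = \frac{R}{6(-v^-)}$, while the weight for $\mathcal{R}_2$ is $0.5(-v^-) \cdot \frac{p_-}{3(-v^-)} = \frac{p_-}{6} = \frac{R}{6(-v^-)}$. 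Since both weights are at least $\frac{R}{6(-v^-)}$, combining via $(1 - \mu(s)) w_1 + \mu(s) w_2 \geq \min(w_1, w_2)$ and integrating over $I$ (of length $0.5(-v^-)$) yields $\mathcal{R}_1 + \mathcal{R}_2 \geq \frac{R}{6(-v^-)} \cdot 0.5(-v^-) = R/12$, so $\max(\mathcal{R}_1, \mathcal{R}_2) \geq R/24$ and hence $\ben \geq R/24$. The main obstacle is arranging instance~1 so that its per-unit-length weight matches that of instance~2; the choice of $k = \lceil -v^- / v^+ \rceil$ shifted uniforms spaced by $0.5 v^+$ is precisely the calibration that makes the weights coincide, so that the constant $1/24$ does not degrade as the ratio $(-v^-)/v^+$ grows.
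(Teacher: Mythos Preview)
Your proposal is correct and follows essentially the same approach as the paper: both lower-bound the densities of $V+A_1$ and $V'+A_1$ on the common window $[\theta,\theta-0.5v^-]$ via \cref{lem:uniform} (using the tiling of shifted uniforms for $V$), invoke the bound $kv^+\le 2(-v^-)$ to align the two per-unit weights at $\frac{R}{6(-v^-)}$, and then balance the pick/don't-pick terms to reach $R/24$. The only cosmetic difference is that you work directly with the density integrals and bound $\mathcal{R}_1+\mathcal{R}_2$, while the paper rewrites everything in terms of the single scalar $p_1=\Pr_{s_1\sim S}[\alg\text{ picks }1]$ and takes the max first; the underlying inequality chain is the same.
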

\begin{proof}
Fix any algorithm $\alg$. According to \cref{lem:uniform}, if $v_1$ is drawn from $U[(0.5 + 0.5t)v^+, (1.5 + 0.5t)v^+]$, the distribution of $s_1 = a_1 + v_1$ will contain a uniform distribution on $[\theta(A_1) + 0.5(t - 1)v^+, \theta(A_1) + 0.5tv^+]$. Summing $t$ from $1$ to $k$, the intervals together cover $[\theta(A_1), \theta(A_1) - 0.5v^-]$.

In light of this, denote $U[\theta(A_1), \theta(A_1) - 0.5v^-]$ as $S$, and evaluate $\reg(\alg)$ with $v_1 \sim V$:
\[
\reg(\alg) \geq \Pr_{s_1 \sim S}[\alg \text{ picks } v_2 \text{ on input } s_1] \cdot 1.0v^+ \cdot (-0.5v^-) \cdot \frac{1}{k} \cdot \frac{1}{3} \cdot \frac{1}{v^+} \cdot \Pr[a_1 < \theta(A_1) - v^+],
\]
where $1.0v^+$ is the minimum loss whenever $\alg$ makes an incorrect decision, and $(-0.5v^-) \cdot \frac{1}{k} \cdot \frac{1}{3} \cdot \frac{1}{v^+} \cdot \Pr[a_1 < \theta(A_1) - v^+]$ is the adjusted probability given by \cref{lem:uniform}.

Similarly, evaluate $\reg(\alg)$ with $v_1 \sim V'$ and we get
\[
\reg(\alg) \geq \Pr_{s_1 \sim S}[\alg \text{ picks } v_1 \text{ on input } s_1] \cdot (-0.5v^-) \cdot (-0.5v^-) \cdot \frac{1}{k} \cdot \frac{1}{3} \cdot \frac{1}{-v^-} \cdot \Pr[a_1 > \theta(A_1) - v^-],
\]

Let $p_1 := \Pr_{s_1 \sim S}[\alg \text{ picks } v_1 \text{ on input } s_1]$, and we have
\begin{align*}
\reg(\alg) &\geq \max \left(\frac{1}{6} \cdot (-v^-) \cdot \frac{1}{k} \cdot \Pr[a_1 < \theta(A_1) - v^+] \cdot (1 - p_1), \ \frac{1}{12} \cdot (-v^-) \cdot \Pr[a_1 > \theta(A_1) - v^-] \cdot p_1\right)\\
&\geq \max \left(\frac{1}{12} \cdot v^+ \cdot \Pr[a_1 < \theta(A_1) - v^+] \cdot (1 - p_1), \ \frac{1}{12} \cdot (-v^-) \cdot \Pr[a_1 > \theta(A_1) - v^-] \cdot p_1\right)\\
&\geq \frac{1}{12} \cdot \max \left(\reg(\off_\theta) \cdot (1 - p_1), \ \reg(\off_\theta) \cdot p_1\right)\\
&\geq \frac{1}{24} \cdot \reg(\off_\theta). \qedhere
\end{align*}
\end{proof}

\section{The General Case}
\label{sec:general}
In this section, we move on to the general case and show that $\off_\theta$ remains constant-competitive against any algorithm.
\subsection{Reduction: Highest Value Has No Noise}
We first argue that if we solve the case where the item with the highest value is associated with no noise, then we can solve the general case.

Consider the vector of values $(\hat v_1, \ldots, \hat v_n)$ that maximizes the regret of $\off_\theta$. Assume $\hat v_1$ is the largest among them via renaming the items. Recall that we denote $\reg(\alg, (A_1, \ldots, A_n), (v_1, \ldots, v_n))$ to be the regret of $\alg$ on the value vector $(v_1, \ldots, v_n)$. We have
\begin{align*}
&\reg\big(\off_\theta, (A_1, \ldots, A_n)\big)\\
= &\reg\big(\off_\theta, (A_1, \ldots, A_n), (\hat v_1, \ldots, \hat v_n)\big)\\
= &\sum_{i = 2}^n \Pr_{\Vec{a} \sim \Vec{A}}\big[\hat v_i + a_i - \theta_i \geq \hat v_1 + a_1 - \theta_1 \land \hat v_i + a_i - \theta_i = \max_{j = 2}^n \hat v_j + a_j - \theta_j\big] \cdot (\hat v_1 - \hat v_i).
\end{align*}
By the union bound, we further get
\begin{align*}
&\reg\big(\off_\theta, (A_1, \ldots, A_n)\big)\\
\leq &\sum_{i = 2}^n \bigg(\Pr_{\Vec{a} \sim \Vec{A}}\big[\hat v_i + a_i - \theta_i \geq \frac{\hat v_1 + \hat v_i}{2} \land \hat v_i + a_i - \theta_i = \max_{j = 2}^n \hat v_j + a_j - \theta_j\big] \\
&\quad + \Pr_{\Vec{a} \sim \Vec{A}}\big[\frac{\hat v_1 + \hat v_i}{2} \geq \hat v_1 + a_1 - \theta_1 \land \hat v_i + a_i - \theta_i = \max_{j = 2}^n \hat v_j + a_j - \theta_j\big]\bigg) \cdot (\hat v_1 - \hat v_i)\\
= &\sum_{i = 2}^n \bigg(\Pr_{\Vec{a} \sim \Vec{A}}\big[\frac{\hat v_i}{2} + a_i - \theta_i \geq \frac{\hat v_1}{2} \land \hat v_i + a_i - \theta_i = \max_{j = 2}^n \hat v_j + a_j - \theta_j\big] \\
&\quad + \Pr_{\Vec{a} \sim \Vec{A}}\big[\frac{\hat v_i}{2} \geq \frac{\hat v_1}{2} + a_1 - \theta_1 \land \hat v_i + a_i - \theta_i = \max_{j = 2}^n \hat v_j + a_j - \theta_j\big]\bigg) \cdot (\hat v_1 - \hat v_i).
\end{align*}
We bound this using the following two lemmas.
\begin{lemma}
\label{lem:highest_no_noise_1}
$\sum_{i = 2}^n \Pr_{\Vec{a} \sim \Vec{A}}\big[\frac{\hat v_i}{2} + a_i - \theta_i \geq \frac{\hat v_1}{2} \land \hat v_i + a_i - \theta_i = \max_{j = 2}^n \hat v_j + a_j - \theta_j\big] \cdot (\hat v_1 - \hat v_i) \leq \sum_{i = 2}^n \Pr_{\Vec{a} \sim \Vec{A}}\big[\frac{\hat v_i}{2} + a_i - \theta_i \geq \frac{\hat v_1}{2} \land \frac{\hat v_i}{2} + a_i - \theta_i = \max_{j = 2}^n \frac{\hat v_j}{2} + a_j - \theta_j\big] \cdot (\hat v_1 - \hat v_i).$
\end{lemma}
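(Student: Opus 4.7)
The plan is to reduce both sides of the asserted inequality to single expectations over $\Vec{a}$ and then verify the bound pointwise on each noise realization. Write $Y_j := a_j - \theta_j$, and assume both argmaxes below are resolved by a fixed tiebreaking rule (ties occur with probability zero when the $A_j$ are continuous; the general case follows by a small perturbation). For each realization, set
\[
i^\star := \argmax_{j\in\{2,\ldots,n\}}\bigl(\hat v_j + Y_j\bigr), \qquad i' := \argmax_{j\in\{2,\ldots,n\}}\Bigl(\tfrac{\hat v_j}{2} + Y_j\Bigr).
\]
Because exactly one index realizes each argmax, the LHS of the lemma rewrites as $\E\bigl[(\hat v_1 - \hat v_{i^\star})\,\mathbf{1}[Y_{i^\star} \geq \tfrac{\hat v_1 - \hat v_{i^\star}}{2}]\bigr]$ and the RHS as $\E\bigl[(\hat v_1 - \hat v_{i'})\,\mathbf{1}[Y_{i'} \geq \tfrac{\hat v_1 - \hat v_{i'}}{2}]\bigr]$, so it suffices to compare the integrands at each $\Vec{a}$.

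First I would compare the weights. Adding the two defining inequalities $\hat v_{i^\star} + Y_{i^\star} \geq \hat v_{i'} + Y_{i'}$ and $\tfrac{\hat v_{i'}}{2} + Y_{i'} \geq \tfrac{\hat v_{i^\star}}{2} + Y_{i^\star}$ cancels the noise terms and yields $\hat v_{i^\star} \geq \hat v_{i'}$; since $\hat v_1$ dominates every $\hat v_j$, this gives $0 \leq \hat v_1 - \hat v_{i^\star} \leq \hat v_1 - \hat v_{i'}$. Next I would transfer the indicator: if $Y_{i^\star} \geq \tfrac{\hat v_1 - \hat v_{i^\star}}{2}$, equivalently $\tfrac{\hat v_{i^\star}}{2} + Y_{i^\star} \geq \tfrac{\hat v_1}{2}$, then by optimality of $i'$ for the halved-value objective we obtain $\tfrac{\hat v_{i'}}{2} + Y_{i'} \geq \tfrac{\hat v_{i^\star}}{2} + Y_{i^\star} \geq \tfrac{\hat v_1}{2}$, i.e., $Y_{i'} \geq \tfrac{\hat v_1 - \hat v_{i'}}{2}$. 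Thus whenever the LHS integrand is nonzero the RHS integrand is nonzero and carries at least as large a weight; otherwise the bound is trivial. Integrating yields the lemma.

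The main conceptual step is recognizing that after fixing the two (random) argmaxes the inequality is actually pointwise in $\Vec{a}$, rather than relying on a distributional coupling or a clever regrouping of the sums. Once this is in place, the argument collapses to the two short manipulations above, the first being a textbook rearrangement and the second an immediate consequence of optimality of $i'$. The only routine wrinkle is ties in the argmaxes, which I would dispatch by a fixed lexicographic tiebreaker so that exactly one summand on each side is active per realization.
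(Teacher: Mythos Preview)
Your proof is correct and follows essentially the same approach as the paper's: rewrite each side as a single expectation indexed by the respective argmax ($i^\star$ and $i'$, corresponding to the paper's $p$ and $q$), then verify the inequality pointwise in $\Vec{a}$ by showing $\hat v_{i^\star} \geq \hat v_{i'}$ and that the indicator transfers via optimality of $i'$. Your derivation of $\hat v_{i^\star} \geq \hat v_{i'}$ by adding the two argmax inequalities is just a slightly more explicit version of what the paper asserts directly.
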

\begin{proof}
Denote $p = \argmax_{j = 2}^n \hat v_j + a_j - \theta_j$, and $q = \argmax_{j = 2}^n \frac{\hat v_j}{2} + a_j - \theta_j$. We have
\[
\mathrm{LHS} = \E_{\Vec{a} \sim \Vec{A}}\left[\mathds{1}\left(\frac{\hat v_p}{2} + a_p - \theta_p \geq \frac{\hat v_1}{2}\right) \cdot (\hat v_1 - \hat v_p)\right],
\]
and
\[
\mathrm{RHS} = \E_{\Vec{a} \sim \Vec{A}}\left[\mathds{1}\left(\frac{\hat v_q}{2} + a_q - \theta_q \geq \frac{\hat v_1}{2}\right) \cdot (\hat v_1 - \hat v_q)\right].
\]
Fix the realized noises $\Vec{a}$. If $p = q$, then $\mathrm{LHS} = \mathrm{RHS}$ for that realization. Otherwise, notice that we must have $\hat v_p \geq \hat v_q$, and thus $\hat v_1 - \hat v_p \leq \hat v_1 - \hat v_q$. Additionally, by the definition of $q$, we have
\[
\mathds{1}\left(\frac{\hat v_q}{2} + a_q - \theta_q \geq \frac{\hat v_1}{2}\right) \geq \mathds{1}\left(\frac{\hat v_p}{2} + a_p - \theta_p \geq \frac{\hat v_1}{2}\right).
\]
Therefore, $\mathrm{LHS} \leq \mathrm{RHS}$ in the lemma statement.
\end{proof}
\begin{lemma}
\label{lem:highest_no_noise_2}
$\sum_{i = 2}^n \Pr_{\Vec{a} \sim \Vec{A}}\big[\frac{\hat v_i}{2} \geq \frac{\hat v_1}{2} + a_1 - \theta_1 \land \hat v_i + a_i - \theta_i = \max_{j = 2}^n \hat v_j + a_j - \theta_j\big] \cdot (\hat v_1 - \hat v_i) \leq \max_{i = 2}^n \Pr_{\Vec{a} \sim \Vec{A}}\big[\frac{\hat v_i}{2} \geq \frac{\hat v_1}{2} + a_1 - \theta_1\big] \cdot (\hat v_1 - \hat v_i).$
\end{lemma}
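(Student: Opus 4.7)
My plan is to exploit two simple structural features of the LHS: the event inside each probability factors into a part depending only on $a_1$ and a part depending only on $a_2,\ldots,a_n$, and the ``argmax'' events are (essentially) disjoint across $i$.

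Concretely, for each $i \in \{2, \ldots, n\}$ I would introduce the shorthand $\mathcal{E}_i := \{\tfrac{\hat v_i}{2} \geq \tfrac{\hat v_1}{2} + a_1 - \theta_1\}$, which depends only on $a_1$, and $\mathcal{F}_i := \{\hat v_i + a_i - \theta_i = \max_{j=2}^n \hat v_j + a_j - \theta_j\}$, which depends only on $(a_2, \ldots, a_n)$. Since all the noises are mutually independent, $\Pr[\mathcal{E}_i \cap \mathcal{F}_i] = \Pr[\mathcal{E}_i]\cdot\Pr[\mathcal{F}_i]$. Writing $p_i := \Pr[\mathcal{E}_i]$ and $q_i := \Pr[\mathcal{F}_i]$, the LHS of the lemma becomes $\sum_{i=2}^n p_i q_i (\hat v_1 - \hat v_i)$.

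Next I would observe that $\sum_{i=2}^n q_i \leq 1$. To justify this cleanly in the presence of ties, fix any deterministic tie-breaking rule (e.g., smallest index) and replace the equality in $\mathcal{F}_i$ with ``$i$ is the tie-broken argmax''; since all weights $\hat v_1 - \hat v_i$ are nonnegative, this substitution can only decrease the LHS, and the resulting events now genuinely partition the sample space of $(a_2, \ldots, a_n)$. The lemma then reduces to the elementary bound that a nonnegatively weighted combination is at most its maximum times the total weight:
\[
\sum_{i=2}^n p_i q_i (\hat v_1 - \hat v_i) \;\leq\; \Bigl(\max_{i=2}^n p_i (\hat v_1 - \hat v_i)\Bigr) \cdot \sum_{i=2}^n q_i \;\leq\; \max_{i=2}^n p_i (\hat v_1 - \hat v_i),
\]
which is exactly the RHS.

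There is no real obstacle here; the argument is essentially one line of independence plus disjointness. The only subtlety worth flagging is the tie-breaking step, but since $\hat v_1 - \hat v_i \geq 0$ for every $i$ in the sum (recall $\hat v_1$ was chosen as the largest coordinate), substituting the ``equality'' with a tie-broken uniqueness condition is a legal upper-bounding move and costs nothing.
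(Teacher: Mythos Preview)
Your argument is correct and essentially identical to the paper's: factor each probability by independence of $a_1$ from $(a_2,\ldots,a_n)$, then bound the resulting convex combination (weights $q_i$ summing to at most one) by its maximum term. One slip worth noting: your tie-breaking aside goes the wrong way --- replacing $\mathcal{F}_i$ by a tie-broken subset shrinks $q_i$ and hence \emph{decreases} the LHS, which does not yield an upper bound on the original LHS; indeed, the inequality as literally written can fail when ties occur with positive probability, so the correct reading (and the paper's implicit convention throughout) is that the argmax events already partition, coming as they do from the algorithm's unique selection.
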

\begin{proof}
Notice that the choice of $\argmax_{j = 2}^n \hat v_j + a_j - \theta_j$ is independent from the realization of $a_1 \sim A_1$. We therefore have
\begin{align*}
\mathrm{LHS} &= \sum_{i = 2}^n \Pr_{\Vec{a} \sim \Vec{A}}\big[\frac{\hat v_i}{2} \geq \frac{\hat v_1}{2} + a_1 - \theta_1\big] \cdot \Pr\big[\hat v_i + a_i - \theta_i = \max_{j = 2}^n \hat v_j + a_j - \theta_j\big] \cdot (\hat v_1 - \hat v_i)\\
&\leq \mathrm{RHS}. \qedhere
\end{align*}
\end{proof}

By \cref{lem:highest_no_noise_1} and \cref{lem:highest_no_noise_2}, we get
\begin{align*}
&\reg\big(\off_\theta, (A_1, \ldots, A_n)\big)\\
\leq &\sum_{i = 2}^n \bigg(\Pr_{\Vec{a} \sim \Vec{A}}\big[\frac{\hat v_i}{2} + a_i - \theta_i \geq \frac{\hat v_1}{2} \land \frac{\hat v_i}{2} + a_i - \theta_i = \max_{j = 2}^n \frac{\hat v_j}{2} + a_j - \theta_j\big] \cdot (\hat v_1 - \hat v_i)\bigg)\\
&\quad + \max_{i = 2}^n \Pr_{\Vec{a} \sim \Vec{A}}\big[\frac{\hat v_i}{2} \geq \frac{\hat v_1}{2} + a_1 - \theta_1\big] \cdot (\hat v_1 - \hat v_i)\\
\leq &2 \cdot \reg(\off_\theta, (0, A_2, \ldots, A_n), (\hat v_1 / 2, \ldots, \hat v_n / 2)) + 2 \cdot \reg(\off_\theta, (A_1, 0)).
\end{align*}
Notice that $\hat v_1 / 2$ is still the largest $\hat v_i / 2$. Now we only need to prove
\begin{itemize}
    \item If $v_1$ is the largest in $\alg$, then
\[
\reg(\off_\theta, (0, A_2, \ldots, A_n), (v_1, v_2, \ldots, v_n)) = O(1) \cdot \reg(\opt, (A_1, \ldots, A_n)).
\]
    \item $\reg(\off_\theta, (A_1, 0)) = O(1) \cdot \reg(\opt, (A_1, \ldots, A_n))$.
\end{itemize}
The second point is immediate, as $\reg(\off_\theta, (A_1, 0)) \leq 24 \cdot \reg(\opt, (A_1, 0))$ by \cref{thm:binary}, which is in turn at most $24 \cdot \reg(\opt, (A_1, A_2))$ since an algorithm for $(A_1, 0)$ can add noises and simulate $(A_1, A_2)$. This is in turn at most $24 \cdot \reg(\opt, (A_1, \ldots, A_n))$, since the values for Items $3, \ldots, n$ can be arbitrarily negative.

For the first point, notice that
\[
\reg(\opt, (0, A_2, \ldots, A_n)) \leq \reg(\opt, (A_1, A_2, \ldots, A_n)),
\]
since $\opt$ can simulate a noise from $A_1$ for the first item. Therefore, we only need that if $v_1$ is the largest among $v_i$'s, then
\[
\reg(\off_\theta, (0, A_2, \ldots, A_n), (v_1, v_2, \ldots, v_n)) = O(1) \cdot \reg(\opt, (0, A_2, \ldots, A_n)).
\]
We will focus on proving this in the rest of this section.

\subsection{Linearization}
Without loss of generality, fix $v_1 = 0$ and impose the constraints that $v_i \leq 0$ for every $i \neq 1$. Let $(v_2^*, \ldots, v_n^*)$ be the solution to maximize
\[
\sum_{i = 2}^n \Pr_{a_i \sim A_i}[v_i + a_i - \theta_i \geq 0] \cdot (-v_i), \text{\quad  s.t. } \sum_{i = 2}^n \Pr_{a_i \sim A_i}[v_i + a_i - \theta_i \geq 0] \leq 0.5.
\]

Slightly abusing the notation, let $(\hat v_2, \ldots \hat v_n)$ be the value vector that maximizes $\reg(\off_\theta, (0, A_2, \ldots, A_n))$. We can uniformly and gradually reduce $(\hat v_2, \ldots \hat v_n)$ until reaching $(v'_2, \ldots, v'_n)$ that satisfies $\Pr_{\Vec{a} \sim \Vec{A}}[\exists i \in \{2, \ldots, n\}, \ v_i' + a_i - \theta_i \geq 0] \leq \frac{1}{2.55} < 1 - \frac{1}{\sqrt{e}}$. This implies $\sum_{i = 2}^n \Pr_{a_i \sim A_i}[v_i + a_i - \theta_i \geq 0] \leq 0.5$ and thus $(v'_2, \ldots, v'_n)$ is feasible to the aforementioned program. We therefore have
\begin{align*}
    \reg(\off_\theta, (0, A_2, \ldots, A_n)) &\leq 2.55 \cdot \reg(\off_\theta, (0, A_2, \ldots, A_n), (v_2', \ldots, v_n'))\\
    &\leq 2.55 \cdot \sum_{i = 2}^n \Pr_{a_i \sim A_i}[v_i' + a_i - \theta_i \geq 0] \cdot (-v_i')\\
    &\leq 2.55 \cdot \sum_{i = 2}^n \Pr_{a_i \sim A_i}[v_i^* + a_i - \theta_i \geq 0] \cdot (-v_i^*).
\end{align*}

We therefore only need to prove
\[
\reg(\opt, (0, A_2, \ldots, A_n)) = \Omega(1) \cdot \sum_{i = 2}^n \Pr_{a_i \sim A_i}[v_i^* + a_i - \theta_i \geq 0] \cdot (-v_i^*).
\]

\subsection{Completing the Proof}
Define $b = \sum_{i = 2}^n \Pr_{a_i \sim A_i}[v_i^* + a_i - \theta_i \geq 0] \cdot (-v_i^*)$. Define $I = \{i \in \{2, 3, \ldots, n\} \mid -v_i^* \geq b\}$. We have the following lemma.
\begin{lemma}
\label{lem:multi_Ib}
$\frac{b}{2} \leq \sum_{i \in I} \Pr_{a_i \sim A_i}[v_i^* + a_i - \theta_i \geq 0] \cdot (-v_i^*) \leq b$.
\end{lemma}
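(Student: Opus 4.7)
The plan is to split the sum defining $b$ according to the set $I$ and its complement $J := \{2,\ldots,n\} \setminus I$, so that $J$ consists of indices with $-v_i^* < b$. The upper bound $\sum_{i \in I} \Pr_{a_i \sim A_i}[v_i^* + a_i - \theta_i \geq 0] \cdot (-v_i^*) \leq b$ is immediate, because $I \subseteq \{2,\ldots,n\}$ and every summand is non-negative, so dropping the indices in $J$ can only decrease the total.

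For the lower bound $\sum_{i \in I} \cdots \geq b/2$, the key is to bound the contribution of $J$ by $b/2$ and subtract. On $J$ we have $-v_i^* < b$ by definition, which lets us factor out $b$ and reduce to a bound on the total probability mass. Namely,
\[
\sum_{i \in J} \Pr_{a_i \sim A_i}[v_i^* + a_i - \theta_i \geq 0] \cdot (-v_i^*) \;<\; b \cdot \sum_{i \in J} \Pr_{a_i \sim A_i}[v_i^* + a_i - \theta_i \geq 0].
\]
Here I would invoke the feasibility of $(v_2^*,\ldots,v_n^*)$ for the program defined just above the lemma: the constraint $\sum_{i=2}^n \Pr_{a_i \sim A_i}[v_i^* + a_i - \theta_i \geq 0] \leq 0.5$ immediately upper bounds the right-hand side by $b/2$. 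Combining with $\sum_{i \in I} + \sum_{i \in J} = b$ gives $\sum_{i \in I} \geq b/2$, as desired.

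There is no real obstacle here; the lemma is a short bookkeeping step whose whole point is to isolate the ``large'' coordinates $I$ that carry at least half of the total regret contribution $b$, so that later in the proof one can focus only on items $i$ with $-v_i^* \geq b$ (where the regret per wrong pick is at least $b$) and lose only a constant factor. The only care needed is to note that the strict inequality for $J$ still yields $\sum_{i \in I} \geq b/2$ (with equality only in the degenerate case where $J$ is empty or contributes exactly nothing), and to remember that the feasibility constraint $\sum \Pr[\cdot] \leq 0.5$ comes from the construction of $(v_2^*,\ldots,v_n^*)$ in the linearization step, not from $(\hat v_2,\ldots,\hat v_n)$ directly.
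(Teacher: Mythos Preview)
Your proposal is correct and follows essentially the same approach as the paper: both split the sum over $\{2,\ldots,n\}$ into $I$ and its complement, bound the complement's contribution using $-v_i^* < b$ together with the feasibility constraint $\sum_{i=2}^n \Pr[v_i^* + a_i - \theta_i \ge 0] \le 0.5$, and subtract to obtain the lower bound.
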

\begin{proof}
The second inequality is immediate.
The first inequality is because
\begin{align*}
\sum_{i \in I} \Pr_{a_i \sim A_i}[v_i^* + a_i - \theta_i \geq 0] \cdot (-v^*_i) &= b - \sum_{i \in \{2, \ldots, n\} \setminus I} \Pr_{a_i \sim A_i}[v_i^* + a_i - \theta_i \geq 0] \cdot (-v^*_i)\\
&\geq b - \sum_{i \in \{2, \ldots, n\} \setminus I} \Pr_{a_i \sim A_i}[v_i^* + a_i - \theta_i \geq 0] \cdot b\\
&\geq b - \sum_{i \in \{2, \ldots, n\}} \Pr_{a_i \sim A_i}[v_i^* + a_i - \theta_i \geq 0] \cdot b\\
&\geq 0.5b. \qedhere
\end{align*}
\end{proof}

Notice that if some $i \in I$ satisfies $\Pr_{a_i \sim A_i}[a_i \in [\theta_i - 0.5 v_i^*, \theta_i - v_i^*]] \geq 0.2$, then
\begin{align*}
\reg(\opt, (0, A_2, \ldots, A_n)) &\geq \reg(\opt, (0, A_i))\\
&\geq \frac{1}{24} \cdot \reg(\off_\theta, (0, A_i))\tag{By \cref{thm:binary}}\\
&\geq \frac{1}{24} \cdot \reg(\off_\theta, (0, A_i), (0, 0.5v_i^*))\\
&= \frac{1}{24} \cdot \Pr_{a_i \sim A_i}[0.5v_i^* + a_i > \theta_i] \cdot (-0.5 v_i^*)\\
&\geq \frac{1}{240} \cdot (-v_i^*) \geq \frac{1}{240} \cdot b,
\end{align*}
and we are done. Therefore, from now on, we assume that no $i \in I$ satisfies $\Pr_{a_i \sim A_i}[a_i \in [\theta_i - 0.5 v_i^*, \theta_i - v_i^*]] \geq 0.2$.

\begin{lemma}
\label{lem:multi_tail1}
For any $\lambda > 1$, $\Pr_{a_i \sim A_i}[a_i \geq \theta_i - \lambda \cdot v_i^* \mid a_i \geq \theta_i - v_i^*] \leq \frac{1}{\lambda}$.
\end{lemma}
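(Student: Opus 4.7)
The lemma establishes a tail bound for $A_i$ that parallels \cref{lem:tail}, except with $v_i^*$ in place of $v_i^-$. Since $v_i^*$ is the solution to a constrained multi-variable program (not the single-variable maximizer that defines $v_i^-$), I cannot directly invoke the optimality argument used in \cref{lem:tail}. The plan is to recover the same tail structure via an \emph{individual-coordinate} optimality of $v_i^*$ in the program.

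Fix $\lambda > 1$ and consider the tuple obtained from $(v_2^*, \ldots, v_n^*)$ by replacing the $i$-th coordinate with $\lambda v_i^*$ while keeping all others fixed. First I would verify this perturbation is feasible for the program defined in the Linearization subsection: since $v_i^* \leq 0$ and $\lambda > 0$, we still have $\lambda v_i^* \leq 0$; and since $\lambda > 1$ forces $\theta_i - \lambda v_i^* \geq \theta_i - v_i^*$, monotonicity gives $\Pr[a_i \geq \theta_i - \lambda v_i^*] \leq \Pr[a_i \geq \theta_i - v_i^*]$, so the aggregate probability constraint $\sum_j \Pr[a_j \geq \theta_j - v_j^*] \leq 0.5$ can only become easier.

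Because only the $i$-th term of the objective changes, optimality of $(v_2^*, \ldots, v_n^*)$ yields
\[
(-\lambda v_i^*) \cdot \Pr_{a_i \sim A_i}[a_i \geq \theta_i - \lambda v_i^*] \;\leq\; (-v_i^*) \cdot \Pr_{a_i \sim A_i}[a_i \geq \theta_i - v_i^*].
\]
For $i \in I$---the relevant case in the subsequent arguments---we have $-v_i^* \geq b > 0$, so dividing both sides by $-v_i^*$ produces $\lambda \cdot \Pr[a_i \geq \theta_i - \lambda v_i^*] \leq \Pr[a_i \geq \theta_i - v_i^*]$. Since the event $\{a_i \geq \theta_i - \lambda v_i^*\}$ is contained in $\{a_i \geq \theta_i - v_i^*\}$, rewriting this as a ratio gives exactly the claimed conditional probability bound.

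The main obstacle---if there is one---is simply identifying that coordinate-wise optimality (rather than a full KKT/Lagrangian analysis) suffices: the $\lambda$-scaling perturbation is the right one because it is automatically feasible and isolates a single term of the objective, turning the global optimality statement into the desired scalar tail inequality.
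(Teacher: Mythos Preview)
Your proposal is correct and follows exactly the same approach as the paper: perturb $v_i^*$ to $\lambda v_i^*$, observe that feasibility of the linearized program is preserved, and invoke coordinate-wise optimality to obtain the scalar tail inequality. The paper's own proof is a one-line sketch of precisely this argument (``Replacing it with $\lambda v_i^*$ is still feasible, and thus should not improve the objective''), so your write-up is simply a more explicit version of the same idea.
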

\begin{proof}
It comes from the optimality of $v_i^*$. Replacing it with $\lambda v_i^*$ is still feasible, and thus should not improve the objective.
\end{proof}

\cref{lem:multi_tail1} immediately gives:
\begin{lemma}
\label{lem:multi_tail2}
$\Pr_{a_i \sim A_i}[a_i \in [\theta_i - v_i^*, \theta_i -1.1\cdot v_i^*]] \geq 0.09 \cdot \Pr[a_i \geq \ \theta_i - v_i^*]$.
\end{lemma}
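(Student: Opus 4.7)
The text explicitly remarks that this lemma follows \emph{immediately} from \cref{lem:multi_tail1}, so I expect the proof to be short and essentially computational. My plan is to apply \cref{lem:multi_tail1} with the choice $\lambda = 1.1$, which yields
\[
\Pr_{a_i \sim A_i}[a_i \geq \theta_i - 1.1 \cdot v_i^*] \;\leq\; \tfrac{1}{1.1} \cdot \Pr_{a_i \sim A_i}[a_i \geq \theta_i - v_i^*].
\]
Since $v_i^* \leq 0$ for every $i \neq 1$, we have $\theta_i - 1.1\,v_i^* \geq \theta_i - v_i^*$, so the point $\theta_i - 1.1\,v_i^*$ indeed lies further out in the upper tail than $\theta_i - v_i^*$, and the above application is consistent with the direction in \cref{lem:multi_tail1}.

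Next, I would rewrite the interval probability as a difference of two tail probabilities:
\[
\Pr_{a_i \sim A_i}\bigl[a_i \in [\theta_i - v_i^*,\, \theta_i - 1.1\,v_i^*]\bigr] \;\geq\; \Pr[a_i \geq \theta_i - v_i^*] \;-\; \Pr[a_i \geq \theta_i - 1.1\,v_i^*].
\]
Plugging in the bound above gives a lower bound of $\bigl(1 - \tfrac{1}{1.1}\bigr)\cdot \Pr[a_i \geq \theta_i - v_i^*] = \tfrac{1}{11}\cdot\Pr[a_i \geq \theta_i - v_i^*]$. Since $\tfrac{1}{11} > 0.09$, this establishes the claimed inequality.

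There is really no substantial obstacle here; the only place one could slip is in the sign bookkeeping induced by $v_i^* \leq 0$, and in making sure the interval $[\theta_i - v_i^*, \theta_i - 1.1\,v_i^*]$ is written with its smaller endpoint first so that the difference-of-tails identity is applied correctly. Once that is handled, the bound $\tfrac{1}{11} > 0.09$ closes the argument with a bit of slack, which also explains why the authors choose the loose constant $0.09$ rather than $\tfrac{1}{11}$.
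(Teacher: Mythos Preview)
Your proposal is correct and matches the paper's intended approach exactly: the paper states that \cref{lem:multi_tail2} ``immediately gives'' from \cref{lem:multi_tail1}, and your instantiation with $\lambda = 1.1$ together with the difference-of-tails computation (yielding $1 - \tfrac{1}{1.1} = \tfrac{1}{11} > 0.09$) is precisely the one-line argument the authors have in mind. Your remark about the sign of $v_i^*$ ensuring the interval is written in the correct order is the only subtlety, and you handle it correctly.
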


\begin{lemma}
\label{lem:multi_center1}
For $i \in I$, $\Pr_{a_i \sim A_i}[a_i \in [\theta_i + 5 v_i^*, \theta_i - 0.5v_i^*]] \geq 0.1$.
\end{lemma}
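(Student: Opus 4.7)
Write $u_i := -v_i^* > 0$ and $q_i := \Pr[a_i \geq \theta_i + u_i]$. The plan is to upper-bound the masses on the two sides of the target interval $[\theta_i - 5u_i,\, \theta_i + 0.5u_i]$ separately and then take complements.

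The right-side bound, $\Pr[a_i > \theta_i + 0.5u_i] < 0.7$, is immediate from what is already on the table. The working assumption that disposed of the easy case gives $\Pr[a_i \in [\theta_i + 0.5u_i,\, \theta_i + u_i]] < 0.2$, and the linearization constraint $\sum_{j \geq 2} q_j \leq 0.5$ from the previous subsection forces $q_i \leq 0.5$. Summing the two contributions yields the claim.

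The left-side bound is the main step. Define $\theta_i^+ := \max_{v > 0} \Pr[a_i \leq \theta_i - v] \cdot v$ and $\theta_i^- := \max_{v > 0} \Pr[a_i \geq \theta_i + v] \cdot v$; by the WLOG symmetry from \cref{subsec:theta}, $\theta_i^+ = \theta_i^-$. Evaluating the defining supremum of $\theta_i^+$ at $v = 5u_i$ gives $\Pr[a_i \leq \theta_i - 5u_i] \leq \theta_i^+ / (5u_i)$, so it suffices to prove $\theta_i^+ \leq u_i$. I split the supremum defining $\theta_i^-$ at $v = u_i$. For $v \in (0, u_i]$, the trivial bound $\Pr[\cdot] \leq 1$ yields $\Pr \cdot v \leq u_i$. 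For $v > u_i$, \cref{lem:multi_tail1} applied with $\lambda = v/u_i > 1$ gives $\Pr[a_i \geq \theta_i + v] \leq q_i u_i / v$, hence $\Pr \cdot v \leq q_i u_i \leq u_i$. Therefore $\theta_i^- \leq u_i$, and $\Pr[a_i \leq \theta_i - 5u_i] \leq 1/5 = 0.2$. Combining with the right-side bound, $\Pr[a_i \in [\theta_i - 5u_i,\, \theta_i + 0.5u_i]] \geq 1 - 0.2 - 0.7 = 0.1$, as desired. The only step I expect to require any care is the bound $\theta_i^+ \leq u_i$: the regime $v > u_i$ is exactly where \cref{lem:multi_tail1} (the tail bound that optimality of $v_i^*$ buys us) takes effect, and the regime $v \leq u_i$ is trivial because $v$ itself is at most $u_i$ there.
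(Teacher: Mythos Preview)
Your argument is correct and follows essentially the same route as the paper. Both proofs bound the complement of the target interval as a sum of three pieces: (i) mass to the right of $\theta_i - v_i^* = \theta_i + u_i$, controlled by the linearization constraint $\sum_j q_j \le 0.5$; (ii) mass in the gap $[\theta_i + 0.5u_i,\theta_i + u_i]$, controlled by the standing assumption $<0.2$; and (iii) mass to the left of $\theta_i - 5u_i$, controlled by the balancing property $\theta_i^+ = \theta_i^-$ of the threshold together with the optimality of $v_i^*$. You merge (i) and~(ii) into a single $<0.7$ bound, but that is only a cosmetic difference. The one place where your write-up is genuinely more explicit than the paper's is step~(iii): the paper asserts ``for negative values the regret is less than $-v_i^*$ by the optimality of $v_i^*$'', while you spell out the needed inequality $\theta_i^- \le u_i$ by splitting at $v = u_i$ and invoking \cref{lem:multi_tail1} on the upper range --- which is precisely the content behind the paper's one-line appeal to optimality.
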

\begin{proof}
Firstly, $\Pr_{a_i \sim A_i}[a_i > \theta_i - v_i^*] \leq 0.5$ by the choice of $v_i^*$. (We have $\sum_{i = 2}^n \Pr_{a_i \sim A_i}[a_i > \theta_i - v_i^*] \leq 0.5$.)

Next, notice that $\Pr_{a_i \sim A_i}[a_i < \theta_i + 5v_i^*] \leq 0.2$ by the choice of $\theta_i$. Otherwise, $\Pr_{a_i \sim A_i}[-5v_i^* + a_i < \theta_i] > 0.2$, giving a regret of more than $-v_i^*$ if it is against a zero-noise value. However, for negative values, the regret is less than $-v_i^*$ by the optimality of $v_i^*$. However, $\theta_i$ should have balanced the regrets instead.

Finally, $\Pr_{a_i \sim A_i}[a_i \in [\theta_i - 0.5v_i^*, \theta_i - v_i^*]] < 0.2$ by our earlier assumption for $i \in I$.
\end{proof}

\begin{lemma}
\label{lem:multi_center2}
For any $i \in I$, there exists some $k_i \in [-0.4, 5]$, so that
\[
\Pr_{a_i \sim A_i}[a_i \in [\theta_i + k_i \cdot v_i^*, \theta_i + (k_i-0.1) \cdot v_i^*]] \geq \frac{1}{550}.
\]
\end{lemma}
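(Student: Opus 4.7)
The plan is a direct pigeonhole argument on the interval supplied by \cref{lem:multi_center1}. Recall that $v_i^* < 0$, so $\theta_i + 5v_i^* < \theta_i - 0.5v_i^*$, and the interval $[\theta_i + 5v_i^*,\, \theta_i - 0.5v_i^*]$ has length $5.5 \cdot |v_i^*|$. By \cref{lem:multi_center1}, this interval carries probability mass at least $0.1$ under $A_i$ whenever $i \in I$.

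Next, observe that each candidate subinterval $[\theta_i + k v_i^*,\, \theta_i + (k-0.1) v_i^*]$ in the statement has length exactly $0.1 \cdot |v_i^*|$. The idea is therefore to tile the big interval by such subintervals: take $k$ ranging over the $55$ values $\{5,\, 4.9,\, 4.8,\, \ldots,\, -0.3,\, -0.4\}$. These $55$ subintervals are pairwise disjoint except on measure-zero boundaries, and their union equals $[\theta_i + 5v_i^*,\, \theta_i - 0.5v_i^*]$.

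Finally, I would apply pigeonhole: since the total mass is at least $0.1$ and there are $55$ subintervals, at least one subinterval must carry mass at least $0.1/55 = 1/550$. The corresponding $k_i$ lies in $\{-0.4, -0.3, \ldots, 4.9, 5\} \subseteq [-0.4, 5]$, which completes the proof.

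I do not anticipate any real obstacle here; the lemma is essentially a routine pigeonhole corollary of \cref{lem:multi_center1}. The only item to be careful about is the sign of $v_i^*$ and the orientation of the interval, which is why the width count ($5 - (-0.5) = 5.5$ divided by $0.1$ gives exactly $55$ pieces and hence the constant $1/550$).
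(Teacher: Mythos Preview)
Your proposal is correct and matches the paper's own proof, which simply cites the pigeonhole principle applied to \cref{lem:multi_center1}. Your careful accounting of the sign of $v_i^*$, the interval orientation, and the count of $55$ subintervals giving the constant $1/550$ is exactly the intended argument.
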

\begin{proof}
This is from the pigeonhole principle and \cref{lem:multi_center1}.
\end{proof}

\begin{lemma}
\label{lem:multi_distinguish1}
Let $V_i$ be the uniform distribution $U[1.2 \cdot v_i^*, v_i^*]$. The PDF of $V_i + A_i$ satisfies:
\[
f_{V_i + A_i}(t) \geq 0.045 \cdot \frac{1}{0.1v_i^*} \cdot \Pr_{a_i \sim A_i}[a_i \geq \theta_i - v_i^*]
\]
for any $t \in [\theta_i + 0.1v_i^*, \theta_i]$.
\end{lemma}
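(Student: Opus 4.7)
The proof should mirror the structure of the binary-case \cref{lem:uniform}: write the density of the sum $V_i+A_i$ as a convolution, use the uniform density of $V_i$ to pull out the $\frac{1}{-0.2 v_i^*}$ factor, change variables so that the remaining integral is a probability of $A_i$ landing in a window of length $-0.2 v_i^*$, and then show this window always contains the previously-studied window $[\theta_i-v_i^*,\theta_i-1.1v_i^*]$, whose mass is controlled by \cref{lem:multi_tail2}.

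Concretely, the plan is the following. First I would write
\[
f_{V_i+A_i}(t)\;=\;\int_{1.2 v_i^*}^{v_i^*} f_{V_i}(q)\,f_{A_i}(t-q)\d q \;=\;\frac{1}{-0.2\,v_i^*}\int_{t-v_i^*}^{t-1.2v_i^*} f_{A_i}(r)\d r,
\]
using $f_{V_i}\equiv \tfrac{1}{-0.2 v_i^*}$ on its support (recall $v_i^*<0$, so the interval $[1.2v_i^*,v_i^*]$ has positive length $-0.2v_i^*$) and the substitution $r=t-q$. So the density equals $\frac{1}{-0.2 v_i^*}\Pr_{a_i\sim A_i}\!\left[a_i\in[t-v_i^*,\,t-1.2v_i^*]\right]$.

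Next I would verify the containment of windows for every $t$ in the stated range. For $t\in[\theta_i+0.1 v_i^*,\theta_i]$, using $v_i^*<0$, the inequalities $t\le\theta_i$ and $t\ge\theta_i+0.1 v_i^*$ rewrite as $t-v_i^*\le\theta_i-v_i^*$ and $t-1.2v_i^*\ge\theta_i-1.1v_i^*$, respectively. Hence the integration window contains $[\theta_i-v_i^*,\,\theta_i-1.1v_i^*]$. By \cref{lem:multi_tail2}, the mass of $A_i$ on this sub-window is at least $0.09\cdot\Pr_{a_i\sim A_i}[a_i\ge\theta_i-v_i^*]$. Plugging back in gives $f_{V_i+A_i}(t)\ge \frac{0.09}{-0.2 v_i^*}\cdot\Pr[a_i\ge\theta_i-v_i^*]=0.045\cdot\frac{1}{-0.1 v_i^*}\cdot\Pr[a_i\ge\theta_i-v_i^*]$, which is the claimed bound (modulo the sign convention, since $-v_i^*$ is the actual length scale).

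I do not anticipate a real obstacle here: the statement is a near-direct analogue of \cref{lem:uniform} with the specific numerical choice of a $0.2|v_i^*|$-wide uniform shift and the $1.1$-tail bound from \cref{lem:multi_tail2} replacing the $\lambda=1.5$ tail bound used in the binary case. The only subtlety is bookkeeping with signs because $v_i^*<0$; I would simply track $|v_i^*|$ mentally while keeping $v_i^*$ in the display to match the paper's notation. The choice of constants $1.2$ and $0.1$ in the statement is precisely what makes the containment $[\theta_i-v_i^*,\theta_i-1.1v_i^*]\subseteq [t-v_i^*,t-1.2v_i^*]$ hold uniformly over $t$, and this alignment is what one should check first before writing the chain of inequalities.
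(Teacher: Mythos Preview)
Your proposal is correct and follows essentially the same approach as the paper: both write the convolution, reduce the integration window so that the substitution lands exactly on $[\theta_i - v_i^*,\ \theta_i - 1.1 v_i^*]$, and then invoke \cref{lem:multi_tail2}. The only cosmetic differences are that you substitute $r=t-q$ first and then shrink the $r$-window (the paper shrinks the $q$-window and then substitutes), and you are more careful with signs, writing $-0.2v_i^*$ for the positive interval length where the paper simply writes $0.2v_i^*$.
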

\begin{proof}
Fix any $t \in [\theta_i + 0.1v_i^*, \theta_i]$. We have
\begin{align*}
f_{V_i + A_i}(t) &= \int_{1.2 \cdot v_i^*}^{v_i^*} \frac{1}{0.2 v_i^*} \cdot f_{A_i}(t - q) \d q\\
&\geq \int_{t - \theta_i + 1.1 \cdot v_i^*}^{t - \theta_i + v_i^*} \frac{1}{0.2 v_i^*} \cdot f_{A_i}(t - q) \d q\\
&= \frac{1}{0.2 v_i^*} \cdot \int_{\theta_i - v_i^*}^{\theta_i - 1.1 \cdot v_i^*} f_{A_i}(q) \d q\\
&\geq \frac{1}{0.2 v_i^*} \cdot 0.09 \cdot \Pr_{a_i \sim A_i}[a_i \geq \theta_i - v_i^*]\tag{By \cref{lem:multi_tail2}}\\
&= \frac{1}{0.1 v_i^*} \cdot 0.045 \cdot \Pr_{a_i \sim A_i}[a_i \geq \theta_i - v_i^*].
\end{align*}
This is exactly what we need.
\end{proof}

Similar to \cref{lem:multi_distinguish1}, we have the following lemma:
\begin{lemma}
\label{lem:multi_distinguish2}
Let $V'_i$ be the uniform distribution $U[(-k_i + 0.2) \cdot v_i^*, (-k_i) \cdot v_i^*]$, where $k_i$ is from \cref{lem:multi_center2}. The PDF of $V'_i + A_i$ satisfies:
\[
f_{V'_i + A_i}(t) \geq 0.0009 \cdot \frac{1}{0.1v_i^*}
\]
for any $t \in [\theta_i + 0.1v_i^*, \theta_i]$.
\end{lemma}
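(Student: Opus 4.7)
The plan is to mirror the proof of \cref{lem:multi_distinguish1} almost verbatim, substituting the interior concentration estimate of \cref{lem:multi_center2} for the tail estimate of \cref{lem:multi_tail2}. The only genuinely new content is the bookkeeping that verifies the integration window produced by the new uniform distribution $V'_i$ covers the ``good'' interval isolated by \cref{lem:multi_center2}.

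First I would write $f_{V'_i + A_i}(t)$ as a convolution. Since $V'_i$ is uniform on an interval of length $0.2 v_i^*$, we get
\[
f_{V'_i + A_i}(t) = \int_{(-k_i + 0.2) v_i^*}^{-k_i v_i^*} \frac{1}{0.2\, v_i^*} \cdot f_{A_i}(t - q)\d q .
\]
Changing variables $u = t - q$ (and being careful with the sign convention used throughout this section, under which $v_i^* \leq 0$ is treated as a positive quantity when it appears in the denominators of PDFs) yields
\[
f_{V'_i + A_i}(t) = \frac{1}{0.2\, v_i^*} \int_{t + k_i v_i^*}^{t + (k_i - 0.2) v_i^*} f_{A_i}(u)\d u .
\]

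Second, I would check that for every $t \in [\theta_i + 0.1 v_i^*,\, \theta_i]$ the integration window $[t + k_i v_i^*,\, t + (k_i - 0.2) v_i^*]$ contains the interval $[\theta_i + k_i v_i^*,\, \theta_i + (k_i - 0.1) v_i^*]$ from \cref{lem:multi_center2}. The two containment inequalities reduce to $t \leq \theta_i$ and $t \geq \theta_i + 0.1 v_i^*$, which are exactly the two endpoints of our range for $t$. Invoking \cref{lem:multi_center2} then gives
\[
f_{V'_i + A_i}(t) \;\geq\; \frac{1}{0.2\, v_i^*} \cdot \frac{1}{550} \;=\; \frac{1}{110\, v_i^*} \;\geq\; 0.0009 \cdot \frac{1}{0.1\, v_i^*},
\]
which is the claimed bound.

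The main obstacle here is not really an obstacle, just careful bookkeeping: since $v_i^* \leq 0$, one must verify that the support interval $[(-k_i + 0.2) v_i^*,\, -k_i v_i^*]$ is oriented correctly (the first endpoint is indeed the smaller number, as multiplying by a negative flips the inequality) and that the change of variables produces the window $[t + k_i v_i^*,\, t + (k_i - 0.2) v_i^*]$ in the correct orientation. Once one aligns the sign conventions with those in \cref{lem:multi_distinguish1}, the computation is routine and the only nontrivial input is the $1/550$ mass bound from \cref{lem:multi_center2}.
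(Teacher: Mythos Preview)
Your proposal is correct and is exactly the argument the paper intends: the paper gives no explicit proof of this lemma, merely stating that it is ``similar to \cref{lem:multi_distinguish1}'', and your write-up spells out precisely that analogy, substituting the interior mass bound $1/550$ from \cref{lem:multi_center2} for the tail bound from \cref{lem:multi_tail2}. Your bookkeeping on the containment of the integration window and the final numerical check $\tfrac{1}{110}\ge 0.009$ are both correct.
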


Now consider the following hard instance for $\opt$:
\begin{itemize}
    \item $v_1$ is known to be $0$.
    \item $v_i$ is ``typically'' drawn from $U[0.4 \cdot v_i^*, 0.2 \cdot v_i^*]$. ($v_i \leq 0.2 \cdot v_i^*$ in this case.)
    \item With probability $\Pr_{a_i \sim A_i}[v_i^* + a_i - \theta_i \geq 0]$ in a disjoint way, $v_i$ is instead ``atypically'' drawn from $U[(-k_i - 0.6) \cdot v_i^*, (-k_i - 0.8) \cdot v_i^*]$. ($v_i \geq -0.2 \cdot v_i^*$ in this case.)
\end{itemize}

\begin{lemma}
\label{lem:multi_distinguish3}
$\reg(\opt, (0, A_2, \ldots, A_n)) \geq 1.8 \times 10^{-4} \cdot \sum_{i \in I} \Pr_{a_i \sim A_i}[v_i^* + a_i - \theta_i \geq 0] \cdot (-v_i^*)$ for the instance above.
\end{lemma}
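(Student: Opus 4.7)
The plan is a LeCam-style indistinguishability argument run on each $i \in I$ separately, summed using a disjointness observation. First, letting $L_i := -v_i^* > 0$, I would locate a ``confusion range'' $R_i := [\theta_i + 0.7 L_i,\, \theta_i + 0.8 L_i]$ on which both the typical and atypical distributions of $s_i = v_i + a_i$ have significant density. The typical support $[-0.4 L_i, -0.2 L_i]$ is exactly the support of $V_i$ from \cref{lem:multi_distinguish1} shifted by $0.8 L_i$, and the atypical support is that of $V'_i$ from \cref{lem:multi_distinguish2} shifted by the same amount; plugging in gives density lower bounds of order $p_i / L_i$ (typical) and $1 / L_i$ (atypical) on $R_i$, where $p_i := \Pr[v_i^* + a_i - \theta_i \geq 0]$.

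Next, for each $i \in I$ I would introduce two error events (with $\opt$'s pick denoted $D$):
\[
E_i^{(1)} := \{\text{only } i \text{ is atypical},\ s_i \in R_i,\ D \neq i\}, \quad E_i^{(2)} := \{\text{all items are typical},\ s_i \in R_i,\ D = i\}.
\]
On $E_i^{(1)}$ the unique positive value is $v_i \geq 0.2 L_i$ and every alternative gives value $\leq 0$, so the regret is at least $0.2 L_i$; on $E_i^{(2)}$ every value is $\leq 0$ with $v_i \leq -0.2 L_i$, so picking $i$ also costs at least $0.2 L_i$. A short case check shows the family $\{E_i^{(1)}, E_j^{(2)}\}_{i,j \in I}$ is pairwise disjoint (either different items are atypical, or different items are picked, or ``only $i$ atypical'' contradicts ``all typical''), giving $\reg(\opt) \geq 0.2 \sum_{i \in I} L_i (\Pr[E_i^{(1)}] + \Pr[E_i^{(2)}])$.

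To lower-bound $\Pr[E_i^{(1)}] + \Pr[E_i^{(2)}]$, I would condition on $G_i := \{\text{every } j \in I \setminus \{i\} \text{ is typical}\}$, whose probability is at least $1/\sqrt{e}$ by the linearization bound $\sum_j p_j \leq 1/2$. Conditional on $G_i$ the distribution of $(s_j)_{j \neq i}$ is fully determined, so $\opt$'s global rule marginalizes to a scalar function $\phi_i(t) := \Pr[D = i \mid s_i = t, G_i]$ that does not depend on whether $i$ is typical or atypical. Consequently
\begin{align*}
\Pr[E_i^{(1)}] + \Pr[E_i^{(2)}] &= \Pr[G_i] \int_{R_i} \bigl[p_i f_{\mathrm{atyp}}(t)(1-\phi_i(t)) + (1-p_i) f_{\mathrm{typ}}(t)\phi_i(t)\bigr]\,dt \\
&\geq \Pr[G_i] \int_{R_i} \min\bigl(p_i f_{\mathrm{atyp}},\, (1-p_i) f_{\mathrm{typ}}\bigr)\,dt,
\end{align*}
which, by the density bounds from Step 1 and $p_i \leq 1/2$, is of order $p_i$. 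Summing over $i \in I$ and multiplying by $0.2 L_i$ then yields $\reg(\opt) \geq 1.8 \times 10^{-4} \sum_{i \in I} p_i L_i$ after routine constant chasing.

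I expect the main obstacle to be the disjointness and marginalization bookkeeping: one must carefully verify that the full family $\{E_i^{(1)}, E_j^{(2)}\}$ is pairwise disjoint (so the per-item regret lower bounds add without double-counting), and argue that $\opt$'s joint, possibly randomized decision rule can legitimately be collapsed to the scalar function $\phi_i$ of $s_i$ after conditioning on $G_i$. Once those are settled, the LeCam-style overlap inequality and the density estimates from \cref{lem:multi_distinguish1,lem:multi_distinguish2} combine almost mechanically to deliver the claimed constant.
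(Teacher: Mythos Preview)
Your plan is essentially the paper's own argument. The paper introduces $r_i := \Pr[\opt \text{ picks } i \mid s_i \sim U[R_i],\ \text{all } j\neq i \text{ typical}]$, which is exactly your $|R_i|^{-1}\int_{R_i}\phi_i$, and then bounds the probability that $\opt$ picks $i$ under ``all typical'' by $\ge 0.045\,p_i\,r_i$ (via \cref{lem:multi_distinguish1}) and the probability it misses $i$ under ``$i$ atypical'' by $\ge 0.0009\,(1-r_i)$ (via \cref{lem:multi_distinguish2}), and sums over $i\in I$. The disjointness of your family $\{E_i^{(1)},E_j^{(2)}\}$ is precisely what makes that summation legitimate; the paper leaves this implicit.

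One bookkeeping slip to fix: in the hard instance the atypical events are \emph{disjoint}, so ``$i$ atypical'' already forces $G_i$. Hence
\[
\Pr[E_i^{(1)}]=p_i\int_{R_i}f_{\mathrm{atyp}}(1-\phi_i),\qquad
\Pr[E_i^{(2)}]=\Bigl(1-\textstyle\sum_j p_j\Bigr)\int_{R_i}f_{\mathrm{typ}}\,\phi_i,
\]
and your displayed equality with a common $\Pr[G_i]$ factor and weight $(1-p_i)$ is not correct as written (nor is $\Pr[G_i]\ge 1/\sqrt{e}$ in the disjoint coupling; the right bound is $\ge 1/2$). Dropping the spurious $\Pr[G_i]$ and using $1-\sum_j p_j\ge\tfrac12$, your LeCam step yields $\Pr[E_i^{(1)}]+\Pr[E_i^{(2)}]\ge 0.0009\,p_i$, and multiplying by $0.2\,L_i$ and summing gives exactly $1.8\times 10^{-4}\sum_{i\in I}p_iL_i$; keeping the extra $\Pr[G_i]$ would leave you short by a factor of roughly two.
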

\begin{proof}
Let $p_i$ be the probability that $\opt$ chooses $v_i$ conditioned on the instance being ``typical''. Let $q_i$ be the probability that $\opt$ chooses $v_i$ conditioned on $v_i$ being ``atypical''. We have
\[
\reg(\opt, (0, A_2, \ldots, A_n)) \geq \sum_{i \in I} \left(\frac{1}{2} \cdot p_i \cdot (-0.2v_i^*) + \Pr_{a_i \sim A_i}[v_i^* + a_i - \theta_i \geq 0] \cdot (1 - q_i) \cdot (-0.2v_i^*)\right).
\]
Let $r_i$ be the probability of $\opt$ picking Item $i$ if $v_i + a_i \sim U[\theta_i - 0.7v_i^*, \theta_i - 0.8v_i^*]$. By \cref{lem:multi_distinguish1} and \cref{lem:multi_distinguish2}, we have
\[
p_i \geq r_i \cdot 0.045 \cdot \Pr[a_i \geq \theta_i - v_i^*], \text{ and } 1 - q_i \geq (1 - r_i) \cdot 0.0009.
\]
Therefore,
\begin{align*}
&\reg(\opt, (0, A_2, \ldots, A_n))\\
\geq &\sum_{i \in I} \left(\frac{1}{2} \cdot r_i \cdot 0.045 \cdot (-0.2v_i^*) + (1 - r_i) \cdot 0.0009 \cdot (-0.2v_i^*)\right) \cdot \Pr_{a_i \sim A_i}[a_i \geq \theta_i - v_i^*]\\
\geq &1.8 \times 10^{-4} \cdot \sum_{i \in I} \Pr[a_i \geq \theta_i - v_i^*] \cdot (-v_i^*). \qedhere
\end{align*}
\end{proof}

To summarize, we have the following theorem:
\begin{theorem}
$\off_\theta$ gives a $57000$-approximation in regret for the general setting.
\end{theorem}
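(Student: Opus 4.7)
The final theorem is essentially a bookkeeping step at this point: all the substantive lemmas are already proved, and what remains is to chain the reductions and tally the constants. My plan is to follow the reductions in the order they appear in the preceding subsections, concluding with a two-case analysis for the multi-item regime.

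First, I would invoke the \textbf{Reduction: Highest Value Has No Noise} subsection. After renaming items so that the largest intrinsic value is $\hat v_1$, that subsection establishes
\[
\reg\big(\off_\theta, (A_1, \ldots, A_n)\big) \;\leq\; 2 \cdot \reg\big(\off_\theta, (0, A_2, \ldots, A_n)\big) \;+\; 2 \cdot \reg\big(\off_\theta, (A_1, 0)\big).
\]
The second term is dispatched immediately by \cref{thm:binary}: since $\reg(\opt, (A_1, 0)) \leq \reg(\opt, (A_1, \ldots, A_n))$ (an algorithm on the larger instance can simulate an independent copy of the relevant noise while setting the extra intrinsic values to be very negative), we obtain $\reg(\off_\theta, (A_1, 0)) \leq 24 \cdot \reg(\opt, (A_1, \ldots, A_n))$.

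For the first term, I would invoke the \textbf{Linearization} step, which yields $\reg(\off_\theta, (0, A_2, \ldots, A_n)) \leq 2.55 \cdot b$ where $b = \sum_{i = 2}^n \Pr_{a_i \sim A_i}[v_i^* + a_i - \theta_i \geq 0] \cdot (-v_i^*)$, and so the remaining task is to bound $b$ by a constant times $\reg(\opt, (0, A_2, \ldots, A_n)) \leq \reg(\opt, (A_1, \ldots, A_n))$. Here \cref{lem:multi_Ib} localizes at least $b/2$ of the mass on items in $I$, and the argument splits: (i) if some $i \in I$ satisfies $\Pr_{a_i \sim A_i}[a_i \in [\theta_i - 0.5 v_i^*, \theta_i - v_i^*]] \geq 0.2$, then the chain of inequalities already displayed in the excerpt (applying \cref{thm:binary} to that single item with comparison value $0.5 v_i^*$) yields $\reg(\opt) \geq b/240$; (ii) otherwise the hypotheses of \cref{lem:multi_distinguish3} are met, which together with \cref{lem:multi_Ib} gives $\reg(\opt) \geq 1.8 \times 10^{-4} \cdot (b/2) = 9 \times 10^{-5} \cdot b$. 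The weaker of the two bounds is $b \leq (9 \times 10^{-5})^{-1} \cdot \reg(\opt) \approx 11112 \cdot \reg(\opt)$.

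Plugging everything in,
\[
\reg(\off_\theta) \;\leq\; (2 \cdot 2.55 \cdot 11112 + 2 \cdot 24)\cdot \reg(\opt) \;<\; 57000 \cdot \reg(\opt),
\]
as claimed. I do not anticipate any real obstacle: the creative work has been discharged in the previous lemmas, the case split is exhaustive by definition, and the remaining work is purely arithmetic. The only thing to verify carefully is that each bound on $\reg(\opt)$ on a sub-instance indeed lifts to $\reg(\opt)$ on the full instance, which follows from the standard observation that $\opt$ can always simulate a strictly harder configuration by drawing additional noises and driving extra intrinsic values to $-\infty$.
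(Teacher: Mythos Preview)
Your proposal is correct and follows essentially the same approach as the paper: invoke the noise-free-highest reduction to split into the $(0,A_2,\ldots,A_n)$ and $(A_1,0)$ terms, bound the latter via \cref{thm:binary}, bound the former by $2.55\,b$ via linearization, and then lower-bound $\reg(\opt)$ by $9\times10^{-5}\,b$ through the two-case split on whether some $i\in I$ has $\Pr[a_i\in[\theta_i-0.5v_i^*,\theta_i-v_i^*]]\geq 0.2$, combining \cref{lem:multi_distinguish3} with \cref{lem:multi_Ib}. The arithmetic matches (the paper writes $5.1/(9\times10^{-5})+48\approx 56715$, you write $2\cdot 2.55\cdot 11112 + 48\approx 56719$; both are below $57000$).
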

\begin{proof}
\cref{lem:multi_distinguish3} shows that if no $i \in I$ satisfies $\Pr_{a_i \sim A_i}[a_i \in [\theta_i - 0.5 v_i^*, \theta_i - v_i^*]] \geq 0.2$, then
\[
\reg(\opt, (0, A_2, \ldots, A_n)) \geq 1.8 \times 10^{-4} \cdot \sum_{i \in I} \Pr_{a_i \sim A_i}[v_i^* + a_i - \theta_i \geq 0] \cdot (-v_i^*).
\]
Further, by \cref{lem:multi_Ib}, we have
\[
\reg(\opt, (0, A_2, \ldots, A_n)) \geq 9 \times 10^{-5} \cdot b.
\]
We have argued that if some $i \in I$ satisfies $\Pr_{a_i \sim A_i}[a_i \in [\theta_i - 0.5 v_i^*, \theta_i - v_i^*]] \geq 0.2$, then
\[
\reg(\opt, (0, A_2, \ldots, A_n)) \geq \frac{1}{240} \cdot b.
\]
Therefore, without the assumption on whether this is true, we always have
\[
\reg(\opt, (0, A_2, \ldots, A_n)) \geq 9 \times 10^{-5} \cdot b.
\]
On the other hand, we have shown
\[
\reg(\off_\theta, (0, A_2, \ldots, A_n)) \leq 2.55 \cdot b.
\]
To summarize the discussion in this section, we have
\begin{align*}
&\reg(\off_\theta, (A_1, A_2, \ldots, A_n))\\
\leq &2 \cdot \reg(\off_\theta, (0, A_2, \ldots, A_n)) + 2 \cdot \reg(\off_\theta, (A_1, 0))\\
\leq &5.1 \cdot b + 48 \cdot \reg(\opt, (A_1, \ldots, A_n))\\
\leq &\left(\frac{5.1}{9 \times 10^{-5}} + 48\right) \cdot \reg(\opt, (A_1, \ldots, A_n))\\
\leq &57000 \cdot \reg(\opt, (A_1, \ldots, A_n)). \qedhere
\end{align*}
\end{proof}

For any symmetric distribution $D$ (i.e., when $\Pr[D > x] = \Pr[D<-x]$ for every $x$), it is not hard to see that by the definition in \cref{subsec:theta}, $\theta(D)=0$. Therefore, in the event where each noise distribution is symmetric, 
$\off_\theta$ is the same as $\simple$, which simply outputs the highest observed value.
\begin{corollary}
$\simple$ gives a $57000$-approximation in regret when the noise distributions are symmetric.
\end{corollary}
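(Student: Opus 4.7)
The plan is to derive the corollary as a short consequence of the preceding theorem by verifying that $\theta(D) = 0$ whenever $D$ is symmetric, then observing that $\off_\theta$ reduces to $\simple$ and invoking the $57000$-approximation guarantee.

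First I would unpack the definition from \cref{subsec:theta}: $\theta(D) = \argmin_t \max(F(t), G(t))$ where $F(t) := \max_{v>0} \Pr[D \leq t - v]\cdot v$ and $G(t) := \max_{v<0} \Pr[D \geq t - v]\cdot(-v)$. The symmetry $\Pr[D \geq x] = \Pr[D \leq -x]$, together with the substitution $v \mapsto -v$ inside $G$, yields $G(t) = F(-t)$, so in particular $F(0) = G(0)$. Next, since $\Pr[D \leq t - v]$ is nondecreasing in $t$ at each fixed $v$, the function $F$ is nondecreasing; by the mirror argument $G$ is nonincreasing. For any $t > 0$ this gives $\max(F(t), G(t)) \geq F(t) \geq F(0) = \max(F(0), G(0))$, and the $t < 0$ case is symmetric. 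Hence $t = 0$ is a minimizer, establishing $\theta(D) = 0$.

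With $\theta_i = 0$ for every symmetric $A_i$, the algorithm $\off_\theta$ picks $\argmax_i (s_i - 0) = \argmax_i s_i$, which is the definition of $\simple$. Applying the preceding theorem then yields $\reg(\simple) = \reg(\off_\theta) \leq 57000 \cdot \ben$. The main obstacle is the monotonicity-plus-symmetry bookkeeping establishing $\theta(D) = 0$; a minor subtlety is that the WLOG balancing assumption $F(\theta(D)) = G(\theta(D))$ from \cref{subsec:theta} is automatically satisfied at $t = 0$ under symmetry, so no perturbation is required.
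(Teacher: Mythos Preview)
Your proposal is correct and follows exactly the paper's approach: show $\theta(D)=0$ for symmetric $D$, conclude $\off_\theta=\simple$, and invoke the main theorem. The paper merely asserts $\theta(D)=0$ as ``not hard to see,'' while you spell out the $F(t)=G(-t)$ and monotonicity argument; this added detail is sound and fills in precisely what the paper omits.
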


\section{Lower Bound Examples}
\label{sec:lower}

In this section, we give a number of examples that show that a natural algorithm (\simple) and two natural classes of algorithms (deterministic, monotone) can be sub-optimal in our regret minimization model. In the case of \simple, the sub-optimality is by an arbitrary factor, and in the other two cases, by a factor of $2-\ep$. This proves \cref{thm:expectation,thm:deterministic,thm:monotone}.

We start with \cref{thm:expectation}. The example here is inspired by the super-constant gap between welfare and revenue on an equal-revenue distribution $D_{\mathrm{ER}}$ in pricing problems, where the optimal welfare $\E[D_{\mathrm{ER}}]$ is much larger than the optimal revenue $\max_p p \cdot \Pr[D_{\mathrm{ER}} \geq p]$.

\expectation*
\begin{proof}
Let $c > 1$ be a parameter that we fix later. Define $A_1$ so that
\[
\Pr[A_1 \leq t] =
\begin{cases}
    0       & \quad \text{if } t \in (-\infty, -1)\\
    \frac{1 + \ln c}{2 + \ln c}  & \quad \text{if } t \in [-1, 1)\\
    1 - \frac{1}{(2 + \ln c) \cdot t}  & \quad \text{if } t \in [1, c)\\
    1  & \quad \text{if } t \in [c, +\infty)
\end{cases}.
\]
One can check that
\begin{align*}
\E[A_1] &= \Pr[A_1 = -1] \cdot (-1) + \int_0^c \Pr[A_1 \geq t] \d t\\
&= -\frac{1 + \ln c}{2 + \ln c} + \frac{1}{2 + \ln c} + \int_1^c \frac{1}{(2 + \ln c) \cdot t} \d t\\
&= 0.
\end{align*}

Clearly, $\theta(A_1) \in [-1, 0]$ and it satisfies:
\[
\Pr[A_1 = -1] \cdot (\theta(A_1) - (-1)) = \max_{v < 0} \Pr[A_1 \geq \theta(A_1) - v] \cdot (-v) = \Pr[A_1 \geq 1] \cdot (1 - \theta(A_1)).
\]
Therefore,
\[
\frac{1 + \ln c}{2 + \ln c} \cdot (\theta(A_1) + 1) = \frac{1}{2 + \ln c} \cdot (1 - \theta(A_1)),
\]
and thus,
\[
\theta(A_1) = -\frac{\ln c}{2 + \ln c}.
\]
Further,
\[
\reg(\off_\theta) = \Pr[A_1 = -1] \cdot (\theta(A_1) - (-1)) = \frac{1 + \ln c}{2 + \ln c} \cdot \frac{2}{2 + \ln c}.
\]

On the other hand,
\[
\reg(\simple) = \Pr[A_1 = -1] \cdot 1 = \frac{1 + \ln c}{2 + \ln c}.
\]

Setting $c > e^{2M}$ concludes the proof.
\end{proof}

The next two results show the limitations of deterministic and monotone algorithms. Note that since $\off_\theta$ is both deterministic and monotone, these results also show that there exists an instance where $\reg(\off_\theta) \geq 2 \cdot \ben$.

\deterministic*

\begin{proof}
Define $A_1$ to be uniformly distributed on two values $\{-1, 1\}$. For any deterministic algorithm $\alg$, it picks either Item 1 or 2 when observe $s_1 = 0$. Without loss of generality we assume $\alg$ picks Item 1 when observe $s_1 = 0$. We set $v_1 = -1$. $\alg$ observes $s_1 = 0$ and pick Item 1 with probability $1/2$. So $\reg(\alg) \geq 1/2$.

Now consider a randomized $\alg'$ which picks Item 1 with probability $\min((s_1 + 1)/2, 1)$.  Now we bound $\reg(\alg', v)$ for different cases of $v$'s:
\begin{itemize}
\item For any $v_1 \geq 2$, we know this randomized algorithm always picks Item 1 and incurs no regret, i.e. $\reg(\alg', v) = 0$.
\item For any $v_1 \in [0, 2)$, we know the algorithm picks Item 2 only when $A_2 = -1$, and the probability of picking Item 2 is $1 - (v_1 - 1 + 1)/2$. Therefore, $\reg(\alg', v) = \frac{1}{2} \cdot v_1 \cdot (1 - v_1 / 2) \leq 1/4$.
\item For any $v_1 < 0$, we can use a similar argument as the previous two cases to show $\reg(\alg', v) \leq 1/4$.
\end{itemize}
Overall, we have $\ben \leq \reg(\alg') \leq 1/4 \leq \reg(\alg) / 2$, for any deterministic $\alg$.
\end{proof}

\monotone*
\begin{proof}
Let $\alpha$ be $\frac{1}{\max\left( 2, \lceil \frac{6}{ \varepsilon} \rceil\right)}$. Therefore $\alpha \leq \min(\varepsilon /6, 1/2)$ and $1/\alpha$ is an integer. Define $A_1$ to be uniformly distributed on $[-1, -1 + \alpha] \cup [ 1 - \alpha, 1]$. 

For any deterministic and monotone algorithm $\alg$, it picks either Item 1 or 2 when observe $s_1 = 0$. Without loss of generality we assume $\alg$ picks Item 1 when observe $s_1 = 0$. Since $\alg$ is monotone, we know $\alg$ picks Item 1, when observe $s_1 \geq 0$. We set $v_1 = -1 + \alpha$. $\alg$ observes $s_1 \geq 0$ and pick Item 1 with probability $1/2$. So $\reg(\alg) \geq (1-\alpha) / 2$.

Now consider the following deterministic (but not monotone) algorithm $\alg'$. For each observed value $s_1$, decompose $s_1$ as $s_1 = p \cdot \alpha + q$ where $p = \lfloor \frac{s_1}{\alpha} \rfloor$. And $\alg'$ picks Item 1 if $q / \alpha \leq (p\cdot \alpha + 1)/2$. The following \cref{fig:sep_mono} illustrates how $\alg'$ picks an item based on $s_1$. Intuitively, $\alg'$ is mimicking the randomized algorithm in \cref{thm:sep_random}. Even as a deterministic algorithm, it can do so because the noise distribution $A_1$ is distributed continuously in two ranges. Notice that $\alg'$ switches $O(1/\alpha)$ times between picking Item 1 and Item 2 when $s_1$ goes from -2 to 2, and this cannot be done by a monotone and deterministic algorithm.
It is easy to check that if $s_1$ distributed uniformly in $[k, k+\alpha]$ for some $k$, the probability of $\alg'$ picks Item 1 is in $\left[\min((k - \alpha + 1)/2, 1),\min((k + \alpha + 1)/2, 1) \right]$.

\begin{figure}
\label{fig:sep_mono}
\centering
\begin{tikzpicture}
    \draw[blue,line width=1mm] (-7.5,0) -- (7.5,0);
    \draw[red,line width=1mm] (-4.5,0) -- (-4.5+1.5/8,0);
    \draw[red,line width=1mm] (-3,0) -- (-3+1.5/4,0);
    \draw[red,line width=1mm] (-1.5,0) -- (-1.5 + 1.5/8*3,0);
    \draw[red,line width=1mm] (0,0) -- (0+1.5/8*4,0);
    \draw[red,line width=1mm] (1.5,0) -- (1.5+1.5/8*5,0);
    \draw[red,line width=1mm] (3,0) -- (3+1.5/8*6,0);
    \draw[red,line width=1mm] (4.5,0) -- (4.5+1.5/8*7,0);
    \draw[red,line width=1mm] (6,0) -- (7.5,0);
    \draw[white, draw opacity =0,] (-7.5,0) -- (7.5,0)
    node[pos=0.1,style={fill,circle,inner sep=2pt,outer sep=0pt},fill=black,text=blue,label=above:\textcolor{black}{$-4\alpha$}]{}
    node[pos=0.2,style={fill,circle,inner sep=2pt,outer sep=0pt},fill=black,text=blue,label=above:\textcolor{black}{$-3\alpha$}]{}
    node[pos=0.3,style={fill,circle,inner sep=2pt,outer sep=0pt},fill=black,text=blue,label=above:\textcolor{black}{$-2\alpha$}]{}
    node[pos=0.4,style={fill,circle,inner sep=2pt,outer sep=0pt},fill=black,text=blue,label=above:\textcolor{black}{$-\alpha$}]{}
    node[pos=0.5,style={fill,circle,inner sep=2pt,outer sep=0pt},fill=black,text=blue,label=above:\textcolor{black}{0}]{}
    node[pos=0.6,style={fill,circle,inner sep=2pt,outer sep=0pt},fill=black,text=blue,label=above:\textcolor{black}{$\alpha$}]{}
    node[pos=0.7,style={fill,circle,inner sep=2pt,outer sep=0pt},fill=black,text=blue,label=above:\textcolor{black}{$2\alpha$}]{}
    node[pos=0.8,style={fill,circle,inner sep=2pt,outer sep=0pt},fill=black,text=blue,label=above:\textcolor{black}{$3\alpha$}]{}
    node[pos=0.9,style={fill,circle,inner sep=2pt,outer sep=0pt},fill=black,text=blue,label=above:\textcolor{black}{$4\alpha$}]{};
    \draw[red,line width=1mm] (4.5,2) -- (5.0,2)
    node[pos=1,label=right:\textcolor{black}{Pick Item 1}]{};
    \draw[blue,line width=1mm] (4.5,1.2) -- (5.0,1.2) node[pos=1,label=right:\textcolor{black}{Pick Item 2}]{};
\end{tikzpicture}
\caption{$\alg'$ for $\alpha = 1/2$}
\end{figure}
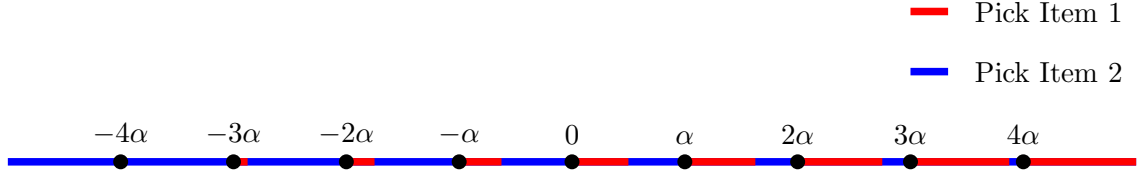
  
Now we bound $\reg(\alg', v)$ for different cases of $v$'s:
\begin{itemize}
    \item For any $v_1 \geq 2 + \alpha$, we know $s_1 \geq 1 + \alpha$, and then $\alg'$ picks Item 1 with probability $1$. Therefore, in this case, $\reg(\alg', v) = 0$.
    \item For any $v_1 \in (\alpha, 2+\alpha)$, we know $s_1 > 1 + \alpha$ only when $A_1 = -1$. $\reg(\alg', v) =\frac{1}{2} \cdot  v_1 \cdot (1 - (v_1 - \alpha)/2) \leq 1/4 + \alpha / 2$.
    \item For any $v_1 \in [0, \alpha]$, $\reg(\alg', v) \leq \alpha \leq 1/4 + \alpha / 2$.
    \item For any $v_1 < 0$, we can use a similar argument as the previous two cases to show $\reg(\alg', v) \leq 1/4 + \alpha / 2$.
\end{itemize}
Overall, we have
\[
\reg(\alg') \leq 1/4 + \alpha/2 < \frac{1/2 - \alpha/ 2}{2 - 6\alpha} \leq \frac{\reg(\alg)}{2-\varepsilon}. \qedhere
\]
\end{proof}

\section{Conclusions}
\label{sec:conclusions}
In this paper, we formulated a problem of minimizing regret when selecting one of many options in a noisy environment. We identified a simple algorithm for this task that satisfies nice properties (deterministic, monotone, and efficiently computable) and provides constant-approximation guarantees. 

We believe this is a promising research direction, ripe with problems that are both mathematically challenging and elegant and practically well-motivated. Here we name a few:

\begin{itemize}
\item An immediate future direction is to generalize our framework to more complex settings than selecting one from $n$ items. For example, it is an interesting question to find an approximately shortest path when the edge lengths can only be observed through noisy channels. This is motivated, for example, by applications in map navigation in presence of traffic: navigation services might observe the amount of time that it takes their users to traverse the roads they are traveling on, and they also have knowledge about the noise in these estimates (for example, having a traffic light on a road adds a random delay, with a known distribution). Using this information, they might want to propose a route to a user looking to go from one point on the map to another.

We point out that simply applying our $\off_\theta$ algorithm to all edges does not provide a constant-approximation in regret in the shortest-path setting, due to potential accumulation of error. We leave this open direction to future research.

\item In several places in this paper (the threshold formula $\theta(D)$, the example in the proof of \cref{thm:expectation}) we encountered similarities with the pricing literature (e.g., the monopoly pricing formula, the equal revenue distribution). Is there any deeper connection between the two settings?

\item Can we prove a better constant bound on the performance of the $\off_\theta$ algorithm? The worst example we know is a factor of 2 (which follows from the results in \cref{sec:lower}).

\item Is there a simpler way to prove \cref{cor:symm} that does not go through the analysis of $\off_\theta$ (and perhaps achieves a better bound)?

\item While the examples in \cref{sec:lower} illustrated that the optimal algorithm can have a complex form, we are not aware of any computational hardness result for this problem. Is such a result possible?

\end{itemize}

\bibliography{bib}
\bibliographystyle{alpha}

\end{document}